\documentclass[a4paper, 12pt]{article}

%% Some recommended packages.
\usepackage{booktabs}   %% For formal tables:
                        %% http://ctan.org/pkg/booktabs
\usepackage{subcaption} %% For complex figures with subfigures/subcaptions
                        %% http://ctan.org/pkg/subcaption

%\usepackage{cite}
\usepackage{amsmath}
\usepackage{amssymb}
\usepackage{amsthm}
\usepackage{mathtools}
\usepackage{url}
\usepackage[all]{xy}
\usepackage{latexsym}
\usepackage{xspace}

\usepackage{graphicx}

\usepackage{xspace}

\input diagxy
\xyoption{color}

\usepackage{algorithm}

\usepackage[noend]{algpseudocode}

\newtheorem{theorem}{Theorem}[section]
\newtheorem{lemma}[theorem]{Lemma}
\newtheorem{corollary}[theorem]{Corollary}

\newtheorem{definition}{Definition}[section]
\newtheorem{example}{Example}[section]

\newtheorem{remark}{Remark}[section]
\newtheorem*{acknnowledgments}{Acknowledgments}

\usepackage{tikz}
\usetikzlibrary{automata,positioning}

\usetikzlibrary{calc}
\usetikzlibrary{arrows, decorations.markings}
\usetikzlibrary{patterns}
\tikzstyle{vecArrow} = [thick, decoration={markings,mark=at position
   1 with {\arrow[semithick]{open triangle 60}}},
   double distance=1.4pt, shorten >= 5.5pt,
   preaction = {decorate},
   postaction = {draw,line width=1.4pt, white,shorten >= 4.5pt}]
   
\tikzstyle{vecArrow2} = [thick, black, double distance=1.4pt, shorten >= 1.5pt,
preaction = {decorate},
   postaction = {draw,line width=1.4pt, white,shorten >= 1.5pt}]
   
\tikzstyle{innerWhite} = [semithick, white,line width=1.4pt, shorten >= 4.5pt]

\makeatletter
\def\slashedarrowfill@#1#2#3#4#5{%
  $\m@th\thickmuskip0mu\medmuskip\thickmuskip\thinmuskip\thickmuskip
  \relax#5#1\mkern-7mu%
  \cleaders\hbox{$#5\mkern-2mu#2\mkern-2mu$}\hfill
  \mathclap{#3}\mathclap{#2}%
  \cleaders\hbox{$#5\mkern-2mu#2\mkern-2mu$}\hfill
  \mkern-7mu#4$%
}
\def\rightslashedarrowfill@{%
  \slashedarrowfill@\relbar\relbar\mapstochar\rightarrow}
\newcommand\xslashedrightarrow[2][]{%
  \ext@arrow 0055{\rightslashedarrowfill@}{#1}{#2}}
\makeatother

\def\slashedrightarrow{\xslashedrightarrow{}}

\newcommand*\circled[1]{\tikz[baseline=(char.base)]{
            \node[shape=circle,draw,inner sep=2pt] (char) {#1};}}

\newcommand{\group}[1]  {
  \mathbb{#1}
}

\newcommand{\catl}[1]  {
  \mathbb{#1}
}
\newcommand{\catw}[1]  {
  \mathbf{#1}
}
\newcommand{\word}[1]  {
  \mathit{#1}
}
%

% definition of an arrow
\newcommand{\mor}[3]  {
  #1 \colon #2 \rightarrow #3
}
%

% definition of a module
\newcommand{\dist}[3]  {
  #1 \colon #2 \slashedrightarrow #3
}
\newcommand{\tuple}[1]  {
	\langle #1 \rangle
}
\newcommand{\id}[1]  {
  \word{id}_{#1}
}
\newcommand{\cont}[1]  {
  \catw{Cont}(\group{#1})
}
\newcommand{\classifying}[1]  {
  \catw{Set}[T]
}
\newcommand{\aut}[1]  {
  \mathit{Aut}(#1)
}

\newcommand{\struct}[1]  {
  \mathcal{#1}
}

\author{Micha{\l} R. Przyby{\l}ek}

\title{Beyond sets with atoms:\\Definability in first order logic}
\date{\today}

\begin{document}
\sloppy

\maketitle

%TEXT START
\abstract{Sets with atoms serve as an alternative to ZFC foundations for mathematics, where some infinite, though highly symmetric sets, behave in a finitistic way. Therefore, one can try to carry over analysis of the classical algorithms from finite structures to symmetric infinite structures. Recent results show that this is indeed possible and leads to many practical applications: automata over infinite alphabets \cite{DBLP:conf/popl/BojanczykBKL12}, model checking \cite{DBLP:conf/csl/KlinL17}, constraint satisfaction solving \cite{ochremiak2016extended}, \cite{DBLP:conf/lics/KlinKOT15}, programming languages \cite{klin2016smt}, \cite{Kopczynski:2017:LSS:3093333.3009876}  and \cite{cheney2008nominal}, to name a few. In this paper we shall take another route to finite analysis of infinite sets, which extends and sheds more light on sets with atoms. As an application of our theory we give a characterisation of languages recognized by automata definable in fragments of first order logic.}

\section{Introduction}
\label{s:introduction}
 %Furthermore, we shall show how, so called, ``seemingly impossible computations'' of Martin Escardo naturally arise in this framework. 
In the late '70s Stephen Schanuel working on the theory of combinatorial functions studied the topos of pullback preserving functors from the category of finite sets and injections to the category of sets, which is nowadays known as the Shanuel topos. Shortly afterwards, when the theory of classifying toposes emerged, it has been discovered that the Schanuel topos is the classifying topos for the first order theory of infinite decidable objects\footnote{Classically, this is just the theory of pure sets. See Example~\ref{e:infinite:decidable}.} \cite{wraith}. 

The Schanuel topos was then rediscovered by James Gabbay and Andrew Pitts \cite{gabbay1999new} as an elegant formalism for reasoning about name bindings in formal languages. This idea was further pursued \cite{pitts2013nominal} and the Schanuel topos earned a new name --- the topos of \emph{nominal sets} --- starting a completely new life in theoretical computer science. A decade later, the connection between nominal sets and the theory of classifying toposes was forgoten and some of the classical results were discovered again in \cite{DBLP:conf/lics/BojanczykKL11}, \cite{DBLP:conf/lics/BojanczykKLT13} and again in \cite{bojanbook}. Nonetheless, many well-known classical results are still unknown.

\begin{figure}[tb]
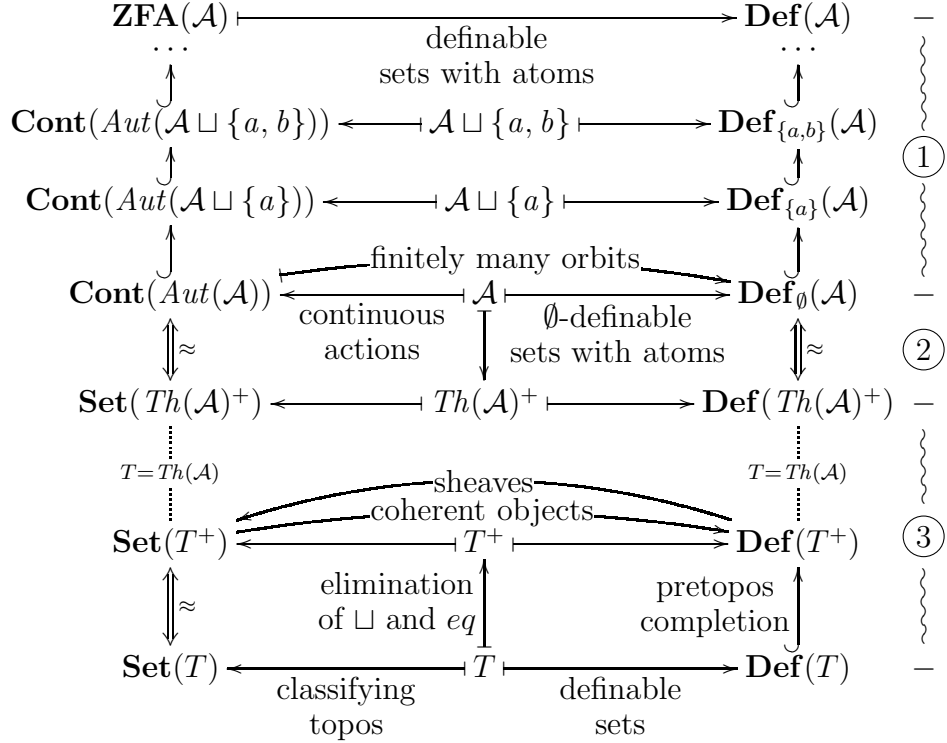

%\cont{\aut{\struct{A} \sqcup \word{A_0}}}
$$\bfig
\node c_topos(-200, -200)[\catw{Set}(T)]
\node c_theory(800, -200)[T]
\node c_def(1800, -200)[\catw{Def}(T)]

\node c_topos_plus(-200, 200)[\catw{Set}(T^+)]
\node c_theory_plus(800, 200)[T^+]
\node c_def_plus(1800, 200)[\catw{Def}(T^+)]

\arrow|l|/<-|/[c_topos`c_theory;\txt{classifying\\topos}]
\arrow|l|/|->/[c_theory`c_def;\txt{definable\\sets}]

\arrow/|->/[c_theory_plus`c_topos_plus;]
\arrow/|->/[c_theory_plus`c_def_plus;]

\arrow|l|/|->/[c_theory`c_theory_plus;\txt{elimination\\of $\sqcup$ and $eq$}]
\arrow|l|/{^{(}->}/[c_def`c_def_plus;\txt{pretopos\\completion}]
\arrow|r|/<=>/[c_topos`c_topos_plus;\approx]
\arrow|m|/@/^1pc//[c_topos_plus`c_def_plus;\txt{coherent objects}]
\arrow|m|/@/_2pc//[c_def_plus`c_topos_plus;\txt{sheaves}]

\node c_topos_th(-200, 650)[\catw{Set}(\word{Th}(\struct{A})^+)]
\node c_theory_th(800, 650)[\word{Th}(\struct{A})^+]
\node c_def_th(1800, 650)[\catw{Def}(\word{Th}(\struct{A})^+)]

\arrow/|->/[c_theory_th`c_topos_th;]
\arrow/|->/[c_theory_th`c_def_th;]

\node cont_topos(-200, 1000)[\textcolor{black}{\cont{\aut{\struct{A}}}}]
\node struct(800, 1000)[\textcolor{black}{\struct{A}}]
\node def(1800, 1000)[\textcolor{black}{\catw{Def}_\emptyset(\struct{A})}]

\arrow/@{|->}@[black]/[struct`c_theory_th;]
\arrow|r|/@{<=>}@[black]/[cont_topos`c_topos_th;\textcolor{black}{\approx}]
\arrow|r|/@{<=>}@[black]/[def`c_def_th;\textcolor{black}{\approx}]

\arrow|l|/@{<-|}@[black]/[cont_topos`struct;\textcolor{black}{\txt{continuous\\actions}}]
\arrow|l|/@{|->}@[black]/[struct`def;\textcolor{black}{\txt{$\emptyset$-definable\\sets with atoms}}]
\arrow|m|/@/^1.1pc/@{|->}@[black]/[cont_topos`def;\textcolor{black}{\txt{finitely many orbits}}]

%\arrow|m|/@/_1pc//[c_topos_th`c_topos_plus;\txt{coherent objects}]
%\arrow|m|/@/^2pc//[c_def_th`c_def_plus;\txt{sheaves}]

\arrow|m|/./[c_topos_th`c_topos_plus;T = \word{Th}(\struct{A})]
\arrow|m|/./[c_def_th`c_def_plus;T = \word{Th}(\struct{A})]

\node col1(-200, 1300)[\textcolor{black}{\cont{\aut{\struct{A} \sqcup \word{\{a\}}}}}]
%\node col2(1100, 1400)[\textcolor{black}{\cont{\aut{\struct{A} \sqcup \word{\{b\}}}}}]
\node col3(-200, 1550)[\textcolor{black}{\cont{\aut{\struct{A} \sqcup \word{\{a, b\}}}}}]
\node col4(-200, 1790)[\textcolor{black}{\cdots}]
%\node colim(-200, 1890)[\textcolor{black}{\underset{A_0 \in K(A)}{\word{colim}} \cont{\aut{\struct{A} \sqcup \word{A_0}}}}]
%\node zfa(1100, 1890)[\textcolor{black}{= \catw{ZFA}(\struct{A})}]
\node zfa(-200, 1890)[\textcolor{black}{\catw{ZFA}(\struct{A})}]

\node dcol1(1800, 1300)[\textcolor{black}{\catw{Def}_{\{a\}}(\struct{A})}]
\node dcol3(1800, 1550)[\textcolor{black}{\catw{Def}_{\{a, b\}}(\struct{A})}]
\node dcol4(1800, 1790)[\textcolor{black}{\cdots}]
\node dzfa(1800, 1890)[\textcolor{black}{\catw{Def}(\struct{A})}]

\node struct1(850, 1300)[\textcolor{black}{\struct{A \sqcup \word{\{a\}}}}]
\node struct2(850, 1550)[\textcolor{black}{\struct{A \sqcup \word{\{a, b\}}}}]

\arrow|l|/@{<-|}@[black]/[col1`struct1;]
\arrow|l|/@{<-|}@[black]/[dcol1`struct1;]

\arrow|l|/@{<-|}@[black]/[col3`struct2;]
\arrow|l|/@{<-|}@[black]/[dcol3`struct2;]

\arrow/@{^{(}->}@[black]/[cont_topos`col1;]
%\arrow/@{^{(}->}@[black]/[cont_topos`col2;]
%\arrow/@{^{(}->}@[black]/[col2`col3;]
\arrow/@{^{(}->}@[black]/[col1`col3;]
\arrow/@{^{(}->}@[black]/[col3`col4;]

\arrow/@{^{(}->}@[black]/[def`dcol1;]
\arrow/@{^{(}->}@[black]/[dcol1`dcol3;]
\arrow/@{^{(}->}@[black]/[dcol3`dcol4;]
\arrow|r|/@{|->}@[black]/[zfa`dzfa;\textcolor{black}{\txt{definable\\sets with atoms}}]

%\arrow|m|/@/^3pc/@{~}/[dzfa`def;\txt{ >}]

%\arrow|m|/@/^2.6pc/@{~}/[def`c_def_th;\txt{ >}]

\node one(2200, 1890)[-]
\node two(2200, 1000)[-]
\node three(2200, 650)[-]
\node four(2200, -200)[-]

\arrow|m|/~/[one`two;\txt{\circled{1}}]
\arrow|m|/~/[two`three;\txt{\circled{2}}]
\arrow|m|/~/[three`four;\txt{\circled{3}}]

\efig$$
\caption{Correspondence between classifying toposes, nominal sets and set theories with atoms.}
\label{f:topos:theory}
\end{figure}

This paper presents nominal sets, and their older cousins: sets with atoms, as a part of a bigger picture (see Figure~\ref{f:topos:theory}, which will be explained throughout the paper) --- the theory of classifying toposes for the positive existential fragment of intuitionistic first order logic. According to this picture, generalised nominal sets are precisely the classifying toposes for $\omega$-categorical structures, whereas set theories with atoms are precisely the filtered colimits of some canonical diagrams of generalised nominal sets.
We shall focus on the aspects of computability in positive existential logic --- which algorithms can be effectively executed, when the domains of the variables are interpreted as ``potentially infinite'' definable sets. This goes beyond theories of oligomorphic structures (Example~\ref{e:pure:sets} and Example~\ref{e:rationals}). Our framework is suitable for $\omega$-categorical structures, which are not oligomorphic (Example~\ref{e:multi:omegacat}), structures build from $\omega$-categorical structures by adding infinitely many constants (Example~\ref{e:pure:sets:constants}), classical non-complete theories (Example~\ref{e:dense:order}), intuitionistic propositional theories (Example~\ref{e:prop:theory} and Example~\ref{e:impossible:theory}), and many more.

\begin{example}[Pure sets]\label{e:pure:sets}
Let $\struct{N} = \{0,1,2,\dotsc\}$ be a countably infinite set over empty signature $\Xi$. Then the first order theory of $\struct{N}$ is $\omega$-categorical, i.e.~there is exactly one countable model of the theory up to an isomorphism.  This theory is called the theory of ``pure sets''.
\end{example}

\begin{example}[Rational numbers with ordering]\label{e:rationals}
Let $\struct{Q} = \tuple{Q, {\leq}}$ be the structure whose universe is interpreted as the set of rational numbers $Q$ with a single binary relation ${\leq} \subseteq Q \times Q$ interpreted as the natural ordering of rational numbers. Then the first order theory of $\struct{Q}$ is $\omega$-categorical.
\end{example}

\begin{example}[Multi-sorted $\omega$-categorical theory]\label{e:multi:omegacat}
Let $\struct{S}$ be a structure consisting of countably many countable sorts identified with natural numbers $N$ and such that the $i$-th sort interprets constants $\{0, 1, \cdots, i-1\}$. Then the theory  $\word{Th}(\struct{S})$ is $\omega$-categorical. However, the group of automorphisms $\aut{\struct{S}}$ of $\struct{S}$ in not oligomorphic ---  since the automorphisms act independently on each sort, the group has infinitely many orbits. %On the other hand, $\struct{S}$ is $\omega$-categorical and by the above theorem its logic is classified by $\cont{\aut{\struct{S}}}$.   
\end{example}

\begin{example}[Pure sets with constants]\label{e:pure:sets:constants}
Let $\struct{N} \sqcup N$ be the structure from Example~\ref{e:pure:sets} over an extended signature consisting of all constants $n \in N$. Then the first order theory of $\struct{N} \sqcup N$  has countably many non-isomorphic countable models.
\end{example}

\begin{example}[Dense linear order]\label{e:dense:order}
Let $T$ be the first order theory of dense linear orders, i.e.: it is a theory, over signature consisting of a single binary predicate ${<}$, with the following axioms (written as first order sequents):
\begin{eqnarray*}
a < a & \vdash & \bot \\
a < b \land b < c & \vdash & a < c \\
& \vdash & a < b \lor b < a \lor a = b \\
a < c & \vdash & \exists_b \; a < b \land b < c
\end{eqnarray*}
This theory is not complete, as it does not specify whether a given linear order has the smallest and the largest element, and if so, whether or not they coincide. %Nevertheless, this theory is locally $\omega$-categorical in the sense defined above, thus, has decidable reachability problem. Its classifying topos is the product of categories of nominal sets corresponding to the completions of the theory.
\end{example}

\begin{example}[Propositional theory with one variable]\label{e:prop:theory}
By propositional theory with one variable we shall mean the empty positive existential theory over zero-sorted signature $\Xi_1$  with a single nullary relation $p \subseteq |\Xi_1|^0 = 1$. A model of this theory in any topos is an internal truth value (i.e.~subobject of the terminal object). For example, in $\catw{Set}$ there are exactly two models: one in which $p$ is false, and another in which $p$ is true.
\end{example}

\begin{example}[Seemingly impossible theory]\label{e:impossible:theory}
By seemingly impossible theory $I$ we shall mean the positive existential theory over zero-sorted signature $\Xi_N$ with countably many nullary relations $\{-n\}_{n \in \mathcal{N}}$ with following axioms: ${-(n+1) \vdash -n}$. A model of $I$ in $\catw{Set}$ consists of a  sequence of true values, followed by zero or infinitely many false values, i.e.~a model can be described by an element of $1^\infty \cup \{1^*0^\infty\}$ and identified with an extended natural number $k$. Under such identification, a (necessarily unique) homomorphism $k \rightarrow l$ exists iff $k \leq l$. Therefore, the category of $I$-models in $\catw{Set}$ is (equivalent to) the poset of extended natural numbers with their usual ordering. 
\end{example}

In \cite{DBLP:journals/corr/BojanczykKL14} a concept of a while-program with semantics in definable sets with atoms $\struct{A}$ has been defined. The authors examine conditions on $\struct{A}$ that ensure that certain while-programs terminate. As an illustrative example, consider the reachability problem on directed graphs. A while-program for this problem is presented as Algorithm~\ref{a:reachability}. This algorithm can be actually implemented in a natural way in a programming language that supports computaton on sets with atoms, for instance: LOIS or $N\lambda$\footnote{A working implementation of $N\lambda$, a functional programming language capable of processing infinite structures with atoms, is available through the web-site: \url{https://www.mimuw.edu.pl/~szynwelski/nlambda/}.} (see \cite{lois} and \cite{lambda}, also \cite{klin2016smt}, \cite{Kopczynski:2017:LSS:3093333.3009876}  and \cite{cheney2008nominal} for more details). We will see in Section~\ref{ss:zfa} that by transfer principle (Theorem~\ref{t:transfer}), the program can be actually executed in the category $\cont{\aut{\struct{A} \sqcup \word{A_0}}}$ of continuous actions of the topological group of automorphisms of structure $\struct{A} \sqcup \word{A_0}$ for some finite $A_0 \subset A$. Moreover, the conditions the authors examine imply that $\struct{A}$ is oligomorphic and  $\cont{\aut{\struct{A} \sqcup \word{A_0}}}$ is the classifying topos for the theory of $\struct{A} \sqcup \word{A_0}$. Therefore, (see Section~\ref{ss:classifying:topos}) their framework restricts to sets definable in the first order theory of oligomorphic structure $\struct{A} \sqcup \word{A_0}$. We will see that Algorithm~\ref{a:reachability} can be effectively executed on sets definable in theories from all our Examples~\ref{e:pure:sets}, \ref{e:rationals}, \ref{e:multi:omegacat}, \ref{e:pure:sets:constants}, \ref{e:dense:order},  \ref{e:prop:theory} and \ref{e:impossible:theory}. 
\begin{algorithm}[tb]
%\hspace*{\algorithmicindent} \textbf{Input:}
\begin{algorithmic}
\Procedure{REACHABLE}{$E, a, b$}
\State $R' \gets \emptyset$
\State $R \gets \{b\}$
\While{$R' \neq R$}
	\If{$a \in R$}
		\Return $\top$
	\EndIf
	\State $R' \gets R$
	\For{$\tuple{x, y} \in E$}
	\If{$y \in R'$}
	 \State $R \gets R \cup \{x\}$
	\EndIf
	\EndFor
\EndWhile
\Return $\bot$
\EndProcedure
\end{algorithmic}    
\caption{Reachability algorithm} 
\label{a:reachability} 
%\hspace*{\algorithmicindent} \textbf{Output:}
\end{algorithm}

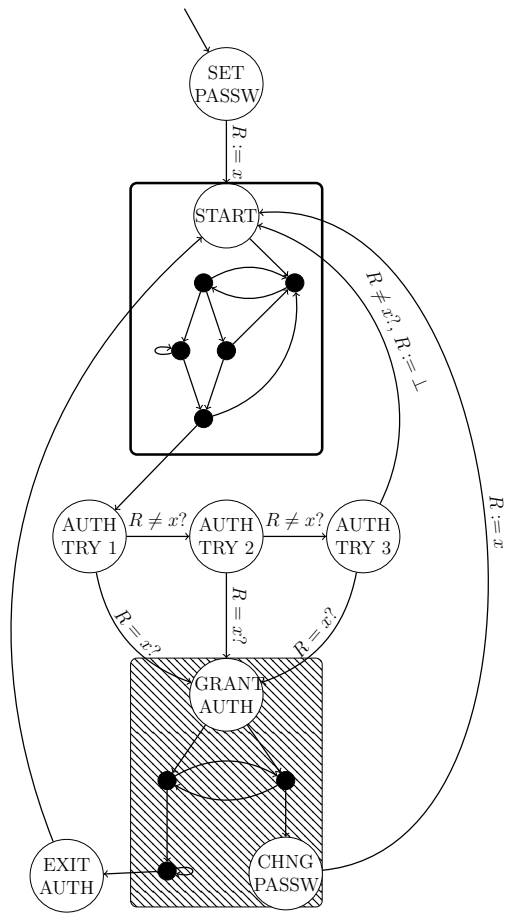
\begin{figure}[htb]
    \centering
    \resizebox{0.6\textwidth}{!}{%

    \begin{tikzpicture}[sloped,ball/.style = {circle, draw, align=center, anchor=north, inner sep=0}]
    
    \draw[black,ultra thick,rounded corners] (0.9,-5.5) rectangle (5.1,-11.5);
    \draw[rounded corners,pattern=north west lines] (0.9,-16) rectangle (5.1,-21.5);
\node[text width=1.4cm] (init) at (2,-1.5) {{}};
\node[ball,text width=1.4cm] (passw) at (3,-2.5) {{SET\\PASSW}};
\node[ball,text width=1.4cm] (start) at (3,-5.5) {{START}};

\draw[thick,->]  (init) to  (passw) ;
\draw[thick,->]  (passw) to  node[midway,above] {$R := x$} (start);

\node[ball,text width=1.4cm] (try1) at (0,-12.5) {{AUTH\\TRY 1}};
\node[ball,text width=1.4cm] (try2) at (3,-12.5) {{AUTH\\TRY 2}};
\node[ball,text width=1.4cm] (try3) at (6,-12.5) {{AUTH\\TRY 3}};

\node[ball,text width=1.4cm,fill=white] (access) at (3,-16) {{GRANT\\AUTH}};

\draw[thick,->]  (try1) to [bend right] node[midway,sloped,above] {$R = x$?} (access);
\draw[thick,->]  (try1) to  node[midway,sloped,above] {$R \neq x$?} (try2);
\draw[thick,->]  (try2) to  node[midway,sloped,above] {$R \neq x$?} (try3);
\draw[thick,->]  (try2) to  node[midway,sloped,above] {$R = x$?} (access);
\draw[thick,<-]  (access) to [bend right] node[midway,above] {$R = x$?} (try3);

\draw[thick,->]  (try3) to [bend right=50] node[midway,sloped,above] {$R \neq x?$, $R := \bot$} (start);

\node[ball,text width=1.4cm] (exit) at (-0.5,-20) {{EXIT\\AUTH}};
\draw[thick,->]  (exit) to [bend left=35] (start);

\node[ball,text width=0.4cm,fill=black] (a1) at (2.5,-7.5) {{}};
\node[ball,text width=0.4cm,fill=black] (a2) at (4.5,-7.5) {{}};
\node[ball,text width=0.4cm,fill=black] (a3) at (2,-9) {{}};
\node[ball,text width=0.4cm,fill=black] (a4) at (3,-9) {{}};
\node[ball,text width=0.4cm,fill=black] (a5) at (2.5,-10.5) {{}};

\draw[thick,->]  (a4) to (a5);

\draw[thick,->]  (start) to (a2);
\draw[thick,->]  (a1) to (a3);
\draw[thick,->]  (a1) to (a4);
\draw[thick,->]  (a3) to (a5);
\draw[thick,->]  (a4) to (a2);
\draw[thick,->]  (a5) [bend right=40] to (a2);
\draw[thick,->]  (a2) [bend left] to (a1);
\draw[thick,->]  (a1) [bend left] to (a2);

%\draw [->] (a3.south)arc(-160:160:0.4);

\draw[thick,->]  (a3) to [loop left] (a3);
\draw[thick,->]  (a5) to (try1);

\node[ball,text width=0.4cm,fill=black] (b1) at (1.7,-18.5) {{}};
\node[ball,text width=0.4cm,fill=black] (b2) at (4.3,-18.5) {{}};
\node[ball,text width=0.4cm,fill=black] (b3) at (1.7,-20.5) {{}};

\node[ball,text width=1.4cm,fill=white] (change) at (4.3,-19.95) {{CHNG\\PASSW}};
\draw[thick,->]  (change) to [bend right=90] node[midway,sloped,above] {$R := x$} (start);

\draw[thick,->]  (access) to (b1);
\draw[thick,->]  (access) to (b2);
\draw[thick,->]  (b1) to (b3);
\draw[thick,->]  (b2) to (change);
\draw[thick,->]  (b3) to (exit);
\draw[thick,->]  (b2) [bend left] to (b1);
\draw[thick,->]  (b1) [bend left] to (b2);

%\draw [->] (b3.south)arc(-160:110:0.3);

\draw[thick,->]  (b3) to [loop right] (b3);

\end{tikzpicture} 
    }%
    \caption{A register machine that models access control to the dashed part of the system.}
    \label{f:register:machine}
  \end{figure}

In parallel with Algorithm~\ref{a:reachability}, we shall study the languages that can be recognised by a generalisation of finite memory machines in the sense of Kaminski and Francez \cite{kaminski1994finite}. An example of such a  machine is presented on Figure~\ref{f:register:machine}. The machine has a single register $R$ and can test  for equality and inequality only. It starts in state ``SET PASSW'', where it awaits for the user to provide a password $x$. This password is then stored in register $R$, and the machine enters state ``START''. Inside the top rectangle the machine can perform actions that do not require authentication, whereas the actions that require authentication are presented inside the bottom rectangle. The bottom rectangle can be entered by state ``GRANT AUTH'', which can be accessed from one of three authentication states. In order to authorise, the machine moves to state ``AUTH TRY 1'', where it gets input $x$ from the user. If the input is the same as the value previously stored in register $R$, then the machine enters state ``GRANT AUTH''. Otherwise, it moves to state ``AUTH TRY 2'' and repeats the procedure. Upon second unsuccessful authorisation, the machine moves to state ``AUTH TRY 3''. But  if the user provides a wrong password when the machine is in state ``AUTH TRY 3'', the register $R$ is erased (replaced with a value that is outside of the user's alphabet) --- preventing the machine to reach any of the states from the bottom rectangle. Inside the bottom rectangle any action that requires authentication can be performed. For example, the user may request the change of the password.

The authors of \cite{DBLP:conf/lics/BojanczykKLT13} found that finite memory machines correspond to automata definable in the theory of pure sets from Example~\ref{e:pure:sets}. A suitable generalisation of this concept to positive existential theories is presented in Section~\ref{s:automata}. We prove there a version of Myhill-Nerode theorem for languages of subcompact/definable deterministic automata (Theorem~\ref{t:subcompact}, Corollary~\ref{c:compact:mn} and Theorem~\ref{t:definable:mn}), and a characterisation theorem for definable non-deterministic automata by definable relational monoids (Corollary~\ref{c:pro:regular}). %These results together gives the following chain of expressive power: $\textit{Monoids} \subset \textit{definable Myhill-Nerode quotients} = \textit{Deterministic automata} \subset \textit{Non-deterministic automata} = \textit{Promonoids}$.
Notice that one has to be careful when studying languages recognised by automata definable in a logical theory, because a language recognized by a definable automata is almost never definable. This problem can be overtaken by describing a language $L$ as a collection of languages $(L^{*k})_{k \in N}$, where $L^{*k}$ is definable and consists of these words of $L$ whose length is at most $k$. Nonetheless, the full justification of such a definition is difficult without the theory of classifying toposes, and the explicit calculations are messy. Therefore, we review some basic facts about classifying toposes in Subsection~\ref{ss:classifying:topos} and perform all of the necessary computations in Section~\ref{s:automata} inside the classifying topos of the theory, where the concept of a language can live naturally. For more information about automata in categories we refer to \cite{adamek1990automata}, \cite{adamek1974free}, \cite{eilenberg1967automata} and \cite{eilenberg1974automata}.

Section~\ref{s:roadmap} is devoted to explaining Figure~\ref{f:topos:theory}. Subsection~\ref{ss:zfa} explains the left side of part \circled{1} on the picture: how Zermelo-Frankel set theory with atoms can be constructed from toposes of continuous actions of topological groups. We state here a meta-theorem (Theorem~\ref{t:transfer} together with Theorem~\ref{t:extended:transfer}) allowing us to delegate computations from ZFA to toposes of continuous actions of topological groups. The right side of \circled{1} on the picture together with \circled{2} is explained in Subsection~\ref{ss:first:order:structure}. We investigate there possible extensions to definability in sets with atoms and prove Theorem~\ref{t:stage} indicating why such attempts might be futile in general. The right square of \circled{3} is explained in Subsection~\ref{ss:coherent:theory}, where we study definability in positive existential theories. Finally, the outer square of \circled{3} is roughly explained in Subsection~\ref{ss:classifying:topos}; for more information about Grothendieck toposes we refer the reader to \cite{maclane2012sheaves}, \cite{johnstone2003sketches} and \cite{borceux}.

%The paper is organized as follows. The next section is devoted to explain Figure~\ref{f:topos:theory} in details. Subsection~\ref{ss:zfa} explains the left blue side of the picture: how Zermelo-Frankel set theory with atoms can be constructed from Grothendieck toposes. We state here a meta-theorem (Theorem~\ref{t:transfer}) allowing us to delegate computations from ZFA to toposes of continuous actions of the topological group of automorphism of $\omega$-categorical structures (nominal sets). The right blue side of the picture together with the red part is explained in Subsection~\ref{ss:first:order}. We investigate here possible extensions to sets with atoms and prove Theorem~\ref{t:stage} indicating why such attempts might be futile. The bottom-right square is explained in Subsection~\ref{ss:coherent:theory} and the outer-bottom square in Subsection~\ref{ss:classifying:topos}. The following Section~\ref{s:automata} shows one possible application for such general theories: languages recognized by automata definable in logical theories. 

\section{The roadmap}
\label{s:roadmap}
In this section we shall explain Figure~\ref{f:topos:theory} in detail and discuss how the computations with atoms can be carried over to a more general framework of classifying toposes for positive existential theories.
\subsection{Set with atoms}\label{ss:zfa}
Let $\struct{A}$ be an algebraic structure (both operations and relations are allowed) with universum $A$. We shall think of elements of $\struct{A}$ as ``atoms''. A von Neumann-like hierarchy $V_\alpha(\struct{A})$ of sets with atoms $\struct{A}$ can be defined by transfinite induction \cite{mostowski1939unabhangigkeit}, \cite{halbeisen2017combinatorial}:
\begin{itemize}
    \item $V_0(\struct{A}) = A$
    \item $V_{\alpha + 1}(\struct{A}) = \mathcal{P}(V_{\alpha}(\struct{A})) \cup V_{\alpha}(\struct{A})$
    \item $V_{\lambda}(\struct{A}) = \bigcup_{\alpha < \lambda} V_{\alpha}(\struct{A})$ if $\lambda$ is a limit ordinal
\end{itemize}
Then the cumulative hierarchy of sets with atoms $\struct{A}$ is just $V(\struct{A}) = \bigcup_{\alpha \colon \mathit{Ord}} V_{\alpha}(\struct{A})$. Observe, that the universe $V(\struct{A})$ carries a natural action $\mor{(\bullet)}{\aut{\struct{A}} \times V(\struct{A})}{V(\struct{A})}$ of the automorphism group $\aut{\struct{A}}$ of structure $\struct{A}$ --- it is just applied pointwise to the atoms of a set. If $X \in V(\struct{A})$ is a set with atoms then by its set-wise stabiliser we shall mean the set: $\aut{\struct{A}}_X = \{\pi \in \aut{\struct{A}} \colon \pi \bullet X = X \}$; and by its point-wise stabiliser the set: $\aut{\struct{A}}_{(X)} = \{\pi \in \aut{\struct{A}} \colon \forall_{x \in X} \pi \bullet x = x \}$. Moreover, for every $X$, these sets inherit a group structure from $\aut{\struct{A}}$.

There is an important sub-hierarchy of the cumulative hierarchy of sets with atoms $\struct{A}$, which consists of ``symmetric sets'' only. To define this hierarchy, we have to equip $\aut{\struct{A}}$ with the structure of a topological group. A set $X \in V(\struct{A})$ is \emph{symmetric} if the set-wise stabilisers of all of its descendants $Y$ is an open set (an open subgroup of $\aut{\struct{A}}$), i.e.~for every $Y \in^* X$ we have that: $\aut{\struct{A}}_{Y}$ is open in $\aut{\struct{A}}$, where ${\in^*}$ is the reflexive-transitive closure of the membership relation ${\in}$. A function between symmetric sets is called symmetric if its graph is a symmetric set.
Of a special interest is the topology on $\aut{\struct{A}}$ inherited from the product topology on $\prod_A A = A^A$ (i.e.~the Tychonoff topology). We shall call this topology the canonical topology on $\aut{\struct{A}}$. In this topology, a subgroup $\group{H}$ of $\aut{\struct{A}}$ is open if there is a finite $A_0 \subseteq A$ such that: $\aut{\struct{A}}_{(A_0)} \subseteq \group{H}$, i.e.: group $\group{H}$ contains a pointwise stabiliser of some finite set of atoms. The sub-hierarchy of $V(\struct{A})$ that consists of symmetric sets according to the canonical topology on $\aut{\struct{A}}$ will be denoted by $\catw{ZFA}(\struct{A})$ (it is a model of Zermelo-Fraenkel set theory with atoms).

\begin{remark}
The above definition of hierarchy of symmetric sets is equivalent to another one used in model theory. %, namely, the definition involving a normal filter of subgroups. A normal filter of subgroups
By a normal filter of subgroups of a group $\group{G}$ we shall understand a filter $\mathcal{F}$ on the poset of subgroups of $\group{G}$ closet under conjugation, i.e.~if $g \in \group{G}$ and $\group{H} \in \mathcal{F}$ then $g \group{H} g^{-1} = \{g \bullet h \bullet g^{-1} \colon h \in \group{H}\} \in \mathcal{F}$. Let $\mathcal{F}$ be a normal filter of subgroups of $\aut{\struct{A}}$. We say that a set $X \in V(\struct{A})$ is $\mathcal{F}$-symmetric if the set-wise stabilisers of all of its descendants $Y$ belong to $\mathcal{F}$ --- i.e. $Y \in^* \mathcal{F}$. To see that the definitions of symmetric sets and $\mathcal{F}$-symmetric sets are equivalent, observe first that if $\group{G}$ is a topological group, then the set $\mathcal{F}$ of all open subgroups of $\group{G}$ is a normal filter of subgroups. In the other direction, if $\mathcal{F}$ is a normal filter of subgroups of a group $\group{G}$, then we may define a topology on $\group{G}$ by declaring sets $U \subseteq \group{G}$ to be open if they satisfy the following property: for every $g \in U$ there exists $\group{H} \in \mathcal{F}$ such that $g \group{H} \subseteq U$. According to this topology a group $\group{U}$ is open iff  $\group{U} \in \mathcal{F}$ --- just observe that for every group $\group{U}$ and for every $g \in \group{U}$ we have that $g\group{U} = \group{U}$; and if $\group{H} \in \mathcal{F}$ such that $\group{H} = 1\group{H} \subseteq \group{U}$ then by the property of the filter, $\group{U} \in \mathcal{F}$.    
\end{remark}

\begin{example}[The basic Fraenkel-Mostowski model]\label{e:first:zfa}
Let $\struct{N}$ be the structure from Example~\ref{e:pure:sets}. We call $\catw{ZFA}(\struct{N})$ the basic Fraenkel-Mostowski model of set theory with atoms. Observe that $\aut{\struct{N}}$ is the group of all bijections (permutations) on $N$. %, and $\aut{\struct{N} \sqcup N_0}$ is the group of all bijections that are fixed on $N_0$.
The following are examples of sets in $\catw{ZFA}(\struct{N})$:
\begin{itemize}
\item all sets without atoms, e.g.~$\emptyset, \{\emptyset\}, \{\emptyset, \{\emptyset\}, \dotsc\}, \dotsc$
\item all finite subsets of $N$, e.g.~$\{0\}, \{0,1,2,3\}, \dotsc$
\item all cofinite subsets of $N$, e.g.~$\{1, 2, 3, \dotsc\}, \{4, 5, 6, \dotsc\}, \dotsc$
\item $N\times N$
%\item cartesian products of symmetric sets $A$ and $B$, e.g.~$A\times B$
%\item the powerset of a symmetric set $A$, e.g.~$\{A_0 \colon A_0 \subseteq A\}$
%\item the exponential object of two symmetric object $A$ and $B$, e.g.~$B^A$
\item $\{\tuple{a,b} \in N^2 \colon a \neq b\}$
\item $N^* = \bigcup_{k\in N} N^k$ 
\item $\mathcal{K}(N) = \{N_0 \colon N_0 \subseteq N, \textit{$N_0$ is finite}\}$
\item $\mathcal{P}_s(N) = \{N_0 \colon N_0 \subseteq N, \textit{$N_0$ is symmetric}\}$
\end{itemize}
Here are examples of sets in $V(\struct{N})$ which are not symmetric:
\begin{itemize}
\item $\{0, 2, 4, 6, \dotsc\}$
\item $\{\tuple{n, m} \in N^2 \colon n \leq m\}$
\item the set of all functions from $N$ to $N$
\item $\mathcal{P}(N) = \{N_0 \colon N_0 \subseteq N\}$
\end{itemize}
\end{example}

\begin{example}[The ordered Fraenkel-Mostowski model]\label{e:ordered:zfa}
Let $\struct{Q}$ be the structure from Example~\ref{e:rationals}. We call $\catw{ZFA}(\struct{Q})$ the ordered Fraenkel-Mostowski model of set theory with atoms. Observe that $\aut{\struct{Q}}$ is the group of all order-preserving bijections on $Q$. %, and $\aut{\struct{N} \sqcup N_0}$ is the group of all bijections that are fixed on $N_0$.
All symmetric sets from Example~\ref{e:first:zfa} are symmetric sets in $\catw{ZFA}(\struct{Q})$ when $N$ is replaced by $Q$. Here are some further symmetric sets:
\begin{itemize}
\item $\{\tuple{p, q} \in Q^2 \colon p \leq q\}$
\item $\{\tuple{p, q} \in Q^2 \colon 0 \leq p \leq q \leq 1 \}$
\end{itemize}
\end{example}

\begin{example}[The second Fraenkel-Mostowski model]\label{e:second:zfa}
Let $\struct{S} = \tuple{Z^*, {-}, (|{-}|_n)_{n \in N}}$ be the structure of non-zero integer numbers, with unary ``minus'' operation $\mor{(-)}{Z^*}{Z^*}$ and with unary relations ${|{-}|_n} \subseteq Z^*$ defined in the following way: $|z|_n \Leftrightarrow |z| = n$. We call $\catw{ZFA}(Z^*)$ the second Fraenkel-Mostowski model of set theory with atoms. Observe that $\aut{\struct{Z^*}} \approx \group{Z}_2^N$, % and $\aut{\struct{N} \sqcup N_0}$ is the group of all bijections that are fixed on $N_0$.
therefore the following sets are symmetric in $\catw{ZFA}(Z^*)$:
\begin{itemize}
\item $\{\dotsc, -6, -4, -2, 2, 4, 6, \dotsc\}$
\item $\{\tuple{x, y} \in Z^* \times Z^* \colon x = 3y \}$
\end{itemize}
\end{example}

Observe that the group $\aut{\struct{A}}_{(A_0)}$ is actually the group of automorphism of structure $\struct{A}$ extended with constants $A_0$, i.e.: $\aut{\struct{A}}_{(A_0)} = \aut{\struct{A} \sqcup A_0}$. Then a set $X \in V(\struct{A})$ is symmetric if and only if there is a finite $A_0 \in A$ such that $\aut{\struct{A} \sqcup A_0} \subseteq \aut{\struct{A}}_X$ and the canonical action of topological group $\aut{\struct{A} \sqcup A_0}$ on discrete set $X$ is continuous. A symmetric set is called $A_0$-equivariant (or equivariant in case $A_0 = \emptyset$) if $\aut{\struct{A} \sqcup A_0} \subseteq \aut{\struct{A}}_X$. Therefore, the (non-full) subcategory of $\catw{ZFA}(\struct{A})$ on $A_0$-equivariant sets and $A_0$-equivariant functions (i.e.~functions whose graphs are $A_0$-equivariant) is equivalent to the category $\cont{\aut{\struct{A} \sqcup \word{A_0}}} \subseteq \catw{Set}^{\aut{\struct{A} \sqcup \word{A_0}}}$ of continuous actions of the topological group $\aut{\struct{A} \sqcup A_0}$ on discrete sets.

\begin{example}[Equivariant sets]
In the basic Fraenkel-Mostowski model:
\begin{itemize}
\item all sets without atoms are equivariant
\item all finite subsets $N_0 \subseteq N$ are $N_0$-equivariant
\item all finite subsets $N_0 \subseteq N$ are $(N \setminus N_0)$-equivariant
\item $N\times N, N^{(2)}, \mathcal{K}(N), \mathcal{P}_S(N)$ are equivariant
\end{itemize}
\end{example}

In many works on computations in sets with atoms (e.g.~Remark~20 in \cite{DBLP:conf/lics/KlinKOT15}, Theorem~11.1 in \cite{lmcs:5877}, \dots) the authors focus on equivariant sets and equivariant functions (i.e.~the category $\cont{\aut{\struct{A}}}$) and claim that their results carry over to $\catw{ZFA}(\struct{A})$. We shall now give a formal argument why such claims are valid.
\begin{lemma}[Presentation of $\catw{ZFA}(\struct{A})$]\label{l:presentation}
Let $\struct{A}$ be an algebraic structure. Then there is a functor $\mor{\Theta}{K(A)}{\word{Log}}$ from the poset $K(A)$ of finite subsets of $A$ seen as a posetal category to the 2-category $\catw{Log}$ of elementary toposes and logical functors. This functor maps $A_0$ to $\cont{\aut{\struct{A} \sqcup \word{A_0}}}$ and ${A_0 \subseteq A_1}$ to the logical embedding: $\cont{\aut{\struct{A} \sqcup \word{A_0}}} \rightarrow  \cont{\aut{\struct{A} \sqcup \word{A_1}}}$. Moreover, $\catw{ZFA}(\struct{A})$ is the colimit of $\Theta$ in $\catw{Log}$ ---  the canonical embeddings $\cont{\aut{\struct{A} \sqcup \word{A_0}}} \rightarrow \catw{ZFA}(\struct{A})$ for finite $A_0 \subseteq A$ are logical embeddings (i.e.~preserve elementary topos structure).
\end{lemma}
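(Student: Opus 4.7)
The plan is to construct $\Theta$ concretely, verify each transition functor is a logical embedding via a topological argument on open stabilisers, and then identify $\catw{ZFA}(\struct{A})$ with the resulting filtered colimit using the explicit characterisation of symmetric sets recorded just before the lemma. On objects I would set $\Theta(A_0) := \cont{\aut{\struct{A} \sqcup A_0}}$, which is a Grothendieck (hence elementary) topos by the standard theory of continuous $G$-sets. On morphisms, the key observation is that for $A_0 \subseteq A_1$ finite, $\aut{\struct{A} \sqcup A_1}$ coincides with the pointwise stabiliser of $A_1 \setminus A_0$ inside $\aut{\struct{A} \sqcup A_0}$, and is therefore an \emph{open} subgroup in the canonical topology; I would define $\Theta(A_0 \subseteq A_1)$ to be the functor that simply restricts the action along this inclusion. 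Continuity of the restricted action and functoriality of $\Theta$ are immediate.

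Next I would check that each such restriction functor is a logical embedding. Faithfulness is trivial since underlying sets and functions are unchanged; preservation of finite limits and of the subobject classifier $\{0,1\}$ (with trivial action) is automatic for the same reason. The real content is preservation of exponentials. Recall that $Y^X$ in $\cont{G}$ has underlying set $\{f : X \to Y \mid \text{Stab}_G(f) \text{ is open in } G\}$, with $G$ acting by conjugation. I would establish the dictionary: for $f : X \to Y$ between $\cont{\aut{\struct{A} \sqcup A_0}}$-objects, $\text{Stab}_{\aut{\struct{A} \sqcup A_0}}(f)$ is open in $\aut{\struct{A} \sqcup A_0}$ iff $\text{Stab}_{\aut{\struct{A} \sqcup A_1}}(f)$ is open in $\aut{\struct{A} \sqcup A_1}$. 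The forward direction is just intersection with the open subgroup $\aut{\struct{A} \sqcup A_1}$. The converse uses that if $\text{Stab}_{\aut{\struct{A} \sqcup A_1}}(f)$ contains a pointwise stabiliser $\aut{\struct{A}}_{(B)}$ for some finite $B \supseteq A_1$, then this same pointwise stabiliser is already contained in $\text{Stab}_{\aut{\struct{A} \sqcup A_0}}(f)$ and is an open neighbourhood of the identity in $\aut{\struct{A} \sqcup A_0}$; a subgroup containing an open neighbourhood of the identity is open. Hence both exponentials have the same underlying set and compatible restricted actions.

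For the colimit assertion, I would invoke the characterisation stated just before the lemma: $X \in V(\struct{A})$ is symmetric iff $X \in \cont{\aut{\struct{A} \sqcup A_0}}$ for some finite $A_0$, with the analogous fact for morphisms. Since $K(A)$ is directed and the transition functors are identity on underlying sets and functions (in particular injective on objects and morphisms), $\catw{ZFA}(\struct{A})$ is set-theoretically the union of the $\cont{\aut{\struct{A} \sqcup A_0}}$, hence their filtered colimit in $\catw{Cat}$. To lift this to $\catw{Log}$, given a compatible cocone of logical functors $F_{A_0} : \cont{\aut{\struct{A} \sqcup A_0}} \to \mathcal{E}$, I would define $F(X) := F_{A_0}(X)$ for any $A_0$ witnessing symmetry of $X$; independence of this choice is forced by compatibility (pass through $A_0 \cup A_0'$), and $F$ preserves elementary structure because any finite diagram in $\catw{ZFA}(\struct{A})$ already lives inside a single $\cont{\aut{\struct{A} \sqcup A_0}}$ on which $F_{A_0}$ is logical by hypothesis. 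The canonical coprojections $\cont{\aut{\struct{A} \sqcup A_0}} \to \catw{ZFA}(\struct{A})$ are logical by the same local argument. The main obstacle I expect is the exponential-preservation step; once the open-subgroup bookkeeping is in place, the rest unfolds mechanically on top of the explicit description of $\catw{ZFA}(\struct{A})$.
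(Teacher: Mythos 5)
The paper states this lemma without any proof at all --- there is no \verb|proof| environment following it, and the surrounding text simply proceeds to use the result --- so there is nothing in the source to compare your argument against; I can only assess it on its own terms. On those terms it is essentially correct and identifies all the points that actually carry weight: (i) that $\aut{\struct{A} \sqcup \word{A_1}}$ is the pointwise stabiliser of $A_1 \setminus A_0$ inside $\aut{\struct{A} \sqcup \word{A_0}}$ and hence an \emph{open} subgroup (this is exactly why restriction is logical; for a non-open subgroup it would fail); (ii) the stabiliser dictionary $\mathrm{Stab}_G(f)$ open in $G$ iff $\mathrm{Stab}_H(f)$ open in $H$, whose converse direction is the only real content of exponential and power-object preservation, and which you argue correctly via a pointwise stabiliser $\aut{\struct{A}}_{(B)}$ sitting inside both; and (iii) the identification of $\catw{ZFA}(\struct{A})$ with the directed union via the characterisation of symmetric sets as $A_0$-equivariant continuous actions. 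Three points deserve slightly more care than you give them. First, $\cont{\aut{\struct{A} \sqcup \word{A_0}}}$ is only \emph{equivalent} to the subcategory of $A_0$-equivariant sets and functions in $\catw{ZFA}(\struct{A})$, not equal to it, so ``set-theoretically the union'' should either be read after replacing each $\cont{\aut{\struct{A} \sqcup \word{A_0}}}$ by that subcategory, or the colimit should be taken in the appropriate $2$-categorical (pseudo) sense --- which also affects your verification of the universal property, where cocone compatibility is a priori only up to natural isomorphism. Second, the logical order matters: per the paper's own Remark~\ref{r:zfa:topos}, this lemma is how one \emph{establishes} that $\catw{ZFA}(\struct{A})$ is an elementary topos, so one should first form the filtered colimit of $\Theta$ in $\catw{Log}$ (which exists and is computed as in $\catw{Cat}$, with the topos structure defined levelwise) and then identify the resulting category with $\catw{ZFA}(\struct{A})$, rather than presupposing the topos structure on $\catw{ZFA}(\struct{A})$ when checking the coprojections are logical. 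Third, your ``same local argument'' for the coprojections hides the one computation that makes the identification work: a function between symmetric sets is symmetric iff the set-wise stabiliser of its graph --- which is precisely $\mathrm{Stab}(f)$ under the conjugation action --- is open, so the exponential of $A_0$-equivariant sets formed in the colimit really does consist of the symmetric functions. None of these is a gap in the mathematics, but each is a place where a referee would ask you to write the sentence out.
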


Because the forgetful functor from $\catw{Log}$ to the category of locally small categories $\catw{Cat}$ preserves filtered colimits, $\catw{ZFA}(\struct{A})$ is also the filtered colimit of logical embeddings in $\catw{Cat}$. Therefore, every diagram $\catl{C} \to^D  \catw{ZFA}(\struct{A})$ in  $\catw{ZFA}(\struct{A})$  of the shape of $\catl{C}$ (i.e.~a functor $\catl{C} \to^D  \catw{ZFA}(\struct{A})$) for a \emph{finitely generated} category $\catl{C}$  factors via some embedding $\cont{\aut{\struct{A} \sqcup \word{A_0}}} \rightarrow  \catw{ZFA}(\struct{A})$: 
$$\bfig
\node b1(600, 0)[\cont{\aut{\struct{A} \sqcup \word{A_0}}}]
\node a2(0, 300)[\catl{C}]
\node b2(600, 300)[\catw{ZFA}(\struct{A})]

\arrow|m|/->/[a2`b2;D]

\arrow|r|/-->/[a2`b1;]
\arrow|r|/{^{(}->}/[b1`b2;]

\efig$$

\begin{theorem}[Transfer principle]\label{t:transfer}
Every categorical reasoning concerning finitely generated diagrams consisting of elementary topos operations, such as: finite limits and colimits, exponentials, power objects, quotients, internal quantifiers, etc. can be studied in $\cont{\aut{\struct{A} \sqcup \word{A_0}}}$ for some finite $A_0 \subseteq A$ and then the results can be transferred back to $\catw{ZFA}(\struct{A})$.
\end{theorem}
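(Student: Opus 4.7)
The plan is to combine two ingredients already assembled in the paper: (i) by Lemma~\ref{l:presentation}, $\catw{ZFA}(\struct{A})$ is the filtered colimit in $\catw{Log}$ of the toposes $\cont{\aut{\struct{A} \sqcup \word{A_0}}}$ indexed by finite $A_0 \subseteq A$, and (ii) the forgetful functor $\catw{Log} \to \catw{Cat}$ preserves filtered colimits, so the same colimit is realised in $\catw{Cat}$. From (ii) one gets a factorisation of any finitely generated diagram through some stage; from the fact that every coprojection $\cont{\aut{\struct{A} \sqcup \word{A_0}}} \hookrightarrow \catw{ZFA}(\struct{A})$ is a logical embedding, one gets that the topos-theoretic constructions computed in the stage agree on the nose with those computed in $\catw{ZFA}(\struct{A})$.

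First I would spell out the factorisation. Let $\catl{C}$ be finitely generated, with finitely many generating objects $c_1,\dots,c_n$, finitely many generating morphisms $f_1,\dots,f_m$, and finitely many defining relations. Given a diagram $\mor{D}{\catl{C}}{\catw{ZFA}(\struct{A})}$, the standard filtered-colimit-in-$\catw{Cat}$ calculation shows that each $D(c_i)$ lies in the image of some $\cont{\aut{\struct{A} \sqcup \word{A_{0,i}}}}$ and each $D(f_j)$ lies in the image of some $\cont{\aut{\struct{A} \sqcup \word{B_j}}}$. Taking the (finite) union $A_0 = \bigcup_i A_{0,i} \cup \bigcup_j B_j$, by filteredness of the indexing poset the whole generating data lives in the single stage $\cont{\aut{\struct{A} \sqcup \word{A_0}}}$, and, after possibly enlarging $A_0$ once more, the finitely many defining relations of $\catl{C}$ are satisfied there as well. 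This provides the dashed arrow in the triangle following the lemma.

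Second I would invoke logicality: since $\cont{\aut{\struct{A} \sqcup \word{A_0}}} \hookrightarrow \catw{ZFA}(\struct{A})$ is logical, it preserves (and, being full and faithful, reflects) every elementary topos operation --- finite limits, finite colimits, exponentials, power objects, the subobject classifier, internal quantifiers, images, quotients by equivalence relations, and so on. Hence any construction built from these operations applied to a finite subdiagram can be performed inside the stage, and the ambient embedding sends the stagewise answer to the $\catw{ZFA}(\struct{A})$-answer; conversely, any property of such a construction expressible in topos-theoretic terms holds in $\catw{ZFA}(\struct{A})$ if and only if it holds in the stage.

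The main obstacle is really the informal phrasing ``every categorical reasoning concerning finitely generated diagrams''. Making this precise amounts to fixing a language --- for instance the internal Mitchell--B\'enabou language of an elementary topos, or equivalently a suitable sketch calculus --- and checking that any judgement in that language about a diagram mentions only finitely many objects, morphisms and equations, so that the factorisation above applies. Once this bookkeeping is done, the transfer itself is a routine consequence of filteredness of the colimit together with the logicality of the coprojections, and no further topos-theoretic input is required.
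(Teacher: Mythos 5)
Your proposal follows exactly the paper's own argument: Lemma~\ref{l:presentation} exhibits $\catw{ZFA}(\struct{A})$ as a filtered colimit of the stages $\cont{\aut{\struct{A} \sqcup \word{A_0}}}$ in $\catw{Log}$, the forgetful functor to $\catw{Cat}$ preserves filtered colimits so any finitely generated diagram factors through a single stage, and logicality of the coprojections transfers the constructions back. The only slip is your parenthetical claim that these coprojections are full --- the paper explicitly describes $\cont{\aut{\struct{A} \sqcup \word{A_0}}}$ as a \emph{non-full} subcategory of $\catw{ZFA}(\struct{A})$ --- but faithfulness together with logicality suffices for the transfer, so nothing essential is affected.
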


\begin{corollary}
If $\struct{A}$ is $\omega$-categorical (resp.~extremely amenable) then for every finite $A_0 \subseteq A$, the structure $\struct{A} \sqcup \word{A_0}$ is $\omega$-categorical (resp.~extremely amenable) as well. Therefore, every theorem involving elementary topos construction that holds for every $\omega$-categorical (resp.~extremely amenable) $\struct{A}$ in $\cont{\aut{\struct{A}}}$, also holds in $\catw{ZFA}(\struct{A})$. 
\end{corollary}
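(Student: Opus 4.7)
The corollary splits naturally into two pieces: inheritance of the relevant model-theoretic/topological property under naming finitely many constants, and then an application of the transfer principle. My plan is first to verify that $\struct{A} \sqcup \word{A_0}$ is $\omega$-categorical (respectively extremely amenable) whenever $\struct{A}$ is, and then to bolt this onto Theorem~\ref{t:transfer}.

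For $\omega$-categoricity I would appeal to the Ryll--Nardzewski theorem, which equates the property with $\aut{\struct{A}}$ being oligomorphic, i.e.\ having finitely many orbits on each $A^n$. Since $\aut{\struct{A} \sqcup \word{A_0}}$ is the pointwise stabiliser $\aut{\struct{A}}_{(A_0)}$, its orbits on $A^n$ are in bijection with the orbits of $\aut{\struct{A}}$ on those tuples in $A^{n+|A_0|}$ whose last $|A_0|$ coordinates agree with a fixed enumeration of $A_0$; a finite partition further refined by prescribing a prefix remains finite, so oligomorphy is inherited. For extreme amenability I would invoke the Kechris--Pestov--Todorcevic correspondence: $\aut{\struct{A}}$ is extremely amenable exactly when the age of $\struct{A}$ (with the usual Fraisse conventions) is a Ramsey class with the ordering property, and naming finitely many constants corresponds to passing to the age of a pointed Fraisse structure, which inherits the Ramsey property. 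In both cases the hypothesis on $\struct{A}$ transfers verbatim to $\struct{A} \sqcup \word{A_0}$.

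To finish, take any theorem of the form stated, and encode its finitely generated elementary-topos diagram as a functor $\mor{D}{\catl{C}}{\catw{ZFA}(\struct{A})}$ on a finitely generated indexing category $\catl{C}$. Theorem~\ref{t:transfer}, together with the factorisation square displayed just before it, lets me push $D$ through the logical embedding $\cont{\aut{\struct{A} \sqcup \word{A_0}}} \hookrightarrow \catw{ZFA}(\struct{A})$ for some finite $A_0 \subseteq A$. By the previous paragraph the hypothesis applies to $\struct{A} \sqcup \word{A_0}$, so the input theorem holds inside $\cont{\aut{\struct{A} \sqcup \word{A_0}}}$; because the embedding is logical, finite limits, finite colimits, exponentials, power objects, quotients and internal quantifiers are preserved, and the conclusion lifts back to $\catw{ZFA}(\struct{A})$. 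The step I expect to be genuinely delicate is extreme amenability: unlike $\omega$-categoricity it is not preserved under arbitrary closed subgroups, so the argument really must exploit the Ramsey-theoretic content (or, equivalently, that the coset space $\aut{\struct{A}}/\aut{\struct{A}}_{(A_0)}$ behaves well in the extremely amenable setting), and this is where a fully rigorous write-up would require the most care.
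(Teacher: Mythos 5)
Your proposal is correct and follows the route the paper intends (the corollary is stated there without an explicit proof): first the preservation of the relevant property under naming finitely many constants, then the factorisation of the finitely generated diagram through the logical embedding $\cont{\aut{\struct{A} \sqcup \word{A_0}}} \hookrightarrow \catw{ZFA}(\struct{A})$ provided by Theorem~\ref{t:transfer}. One small quibble: in the Kechris--Pestov--Todorcevic correspondence extreme amenability is equivalent to the age being a Ramsey class of \emph{rigid} structures --- the ordering property is not required for that equivalence --- but this does not affect your argument, and you rightly single out the inheritance of the Ramsey property by the pointed class (equivalently, extreme amenability of the open subgroup $\aut{\struct{A}}_{(A_0)}$) as the one step demanding genuine care.
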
 

We can slightly extend the above transfer principle by observing that not only the forgetful functor $\catw{Log} \rightarrow \catw{Cat}$ preserves filtered colimits, but also the forgetful functor $\catw{Log} \rightarrow \catw{Cart}$ to the 2-category of cartesian categories and cartesian functors does. Therefore, $\catw{ZFA}(\struct{A})$ is also a filtered colimit of logical embeddings in $\catw{Cart}$. As a consequence, every \emph{cartesian} functor $\catl{C} \to^D  \catw{ZFA}(\struct{A})$ in  $\catw{ZFA}(\struct{A})$ of the shape of a \emph{finitely generated} cartesian category $\catl{C}$ factors via some embedding $\cont{\aut{\struct{A} \sqcup \word{A_0}}} \rightarrow  \catw{ZFA}(\struct{A})$. 

\begin{theorem}[Extended transfer principle]\label{t:extended:transfer}
Let $\mathcal{L}$ be a Lawvere theory with finitely many operations. Then a model of $\mathcal{L}$ in $\catw{ZFA}(\struct{A})$ is a model of $\mathcal{L}$ in $\cont{\aut{\struct{A} \sqcup \word{A_0}}}$ for some finite $A_0 \subseteq A$. Moreover, if $X \in \catw{ZFA}(\struct{A})$ lives in  $\cont{\aut{\struct{A} \sqcup \word{A_0}}}$ then the free $\mathcal{L}$-algebra over $X$ exists in $\catw{ZFA}(\struct{A})$ and is the free $\mathcal{L}$-algebra over $X$ in $\cont{\aut{\struct{A} \sqcup \word{A_0}}}$.
\end{theorem}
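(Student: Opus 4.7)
The plan is to reduce the statement to the factorization property of filtered colimits in $\catw{Cart}$ that is highlighted immediately before the theorem: every cartesian functor from a \emph{finitely generated} cartesian category into $\catw{ZFA}(\struct{A})$ factors through some stage $\cont{\aut{\struct{A}\sqcup\word{A_0}}}$ of the diagram $\Theta$.

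For the first assertion, I would observe that a Lawvere theory $\mathcal{L}$ with finitely many operation symbols is a finitely generated cartesian category: its generators are the finitely many operations, and the remaining structure (finite products with their projections and the induced tuplings and diagonals) is freely adjoined by the cartesian closure. Since an $\mathcal{L}$-model in a cartesian category $\catl{C}$ is, by definition, a cartesian functor $M\colon \mathcal{L}\to\catl{C}$, the factorization property applied to $M\colon \mathcal{L}\to\catw{ZFA}(\struct{A})$ yields a finite $A_0\subseteq A$ and a lift of $M$ through $\cont{\aut{\struct{A}\sqcup\word{A_0}}}$, which is exactly what is asserted.

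For the second assertion, assume $X$ lives in $\cont{\aut{\struct{A}\sqcup\word{A_0}}}$. Since this is a Grothendieck topos and the endofunctor $T_\mathcal{L}$ associated with $\mathcal{L}$ is finitary (having finitely many finitary operations), the free $\mathcal{L}$-algebra $F_0(X)$ exists and is built by the standard Ad\'amek sequential colimit of the iterates $T_\mathcal{L}^n X$. I claim that $F_0(X)$, viewed in $\catw{ZFA}(\struct{A})$ via the embedding, remains the free $\mathcal{L}$-algebra on $X$ there. To verify the universal property, let $Y$ be any $\mathcal{L}$-algebra in $\catw{ZFA}(\struct{A})$ and $f\colon X\to Y$ any morphism. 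The finitely generated data $(X,Y,f,\text{algebra structure of }Y)$ factors through some $\cont{\aut{\struct{A}\sqcup\word{A_1}}}$ with $A_1\supseteq A_0$, so the required unique morphism $F_0(X)\to Y$ exists there \emph{provided} that $F_0(X)$ is also the free $\mathcal{L}$-algebra in this larger topos.

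The main delicate step is precisely this preservation of free algebras along the logical embedding $\cont{\aut{\struct{A}\sqcup\word{A_0}}}\hookrightarrow\cont{\aut{\struct{A}\sqcup\word{A_1}}}$. This embedding is restriction of the action along the open subgroup inclusion $\aut{\struct{A}\sqcup\word{A_1}}\subseteq\aut{\struct{A}\sqcup\word{A_0}}$; as restriction along an open subgroup it is the inverse image of an essential geometric morphism, so it has both a left and a right adjoint and therefore preserves all small limits and all small colimits. In particular it commutes with the Ad\'amek sequential colimit that constructs $F_0(X)$, so $F_0(X)$ remains the free $\mathcal{L}$-algebra in the larger topos. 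Transporting the universal morphism along the final logical embedding into $\catw{ZFA}(\struct{A})$ completes the argument.
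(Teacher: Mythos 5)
Your proof is correct and takes essentially the same route as the paper's: it reduces the first claim to the factorization of cartesian functors out of the finitely generated cartesian category $\mathcal{L}$ through a stage, and it obtains freeness in $\catw{ZFA}(\struct{A})$ by observing that the logical embeddings $\cont{\aut{\struct{A} \sqcup \word{A_0}}} \rightarrow \cont{\aut{\struct{A} \sqcup \word{A_1}}}$ preserve the colimits building the free algebra and by factoring any test algebra and map through a common larger stage. The only point you leave implicit is that uniqueness of the induced homomorphism must also be checked against competitors living at arbitrarily large stages, but your own factorization argument covers this exactly as the paper's explicit uniqueness paragraph does.
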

\begin{proof}
The only non-trivial part is that $\catw{ZFA}(\struct{A})$ has free $\mathcal{L}$-algebras. For an $A_0$-equivariant set $X \in \catw{ZFA}(\struct{A})$ define its free $\mathcal{L}$-algebra $\mor{X^*}{\mathcal{L}}{\catw{ZFA}(\struct{A})}$ as the free $\mathcal{L}$-algebra $X^*_{A_0}$ on $X$ in $\cont{\aut{\struct{A} \sqcup \word{A_0}}}$ postcomposed with the natural embedding $\cont{\aut{\struct{A} \sqcup \word{A_0}}} \rightarrow \catw{ZFA}(\struct{A})$. Because the logical functors $\cont{\aut{\struct{A} \sqcup \word{A_0}}} \rightarrow \cont{\aut{\struct{A} \sqcup \word{A_1}}}$ are (co)continuous, the definition of $X^*$ does not depend on a particular choice of $A_i \subseteq A_0$.

Consider any $\mathcal{L}$-algebra $\mor{M}{\mathcal{L}}{\catw{ZFA}(\struct{A})}$ and a symmetric function $\mor{f}{X}{|M|}$ to the underlying set $|M|$ of $M$. By the definition of cofiltered limits in $\catw{Cart}$ there is some finite $A_1$ such that both $X^*$ and $M$ factors as $\mathcal{L}$-algebras in $\cont{\aut{\struct{A} \sqcup \word{A_0}}}$ and $\mor{f}{X}{|M|}$ is $A_1$-symmetric. Because $X^*_{A_1}$ is free, function $f_{A_1}$ uniquely extends to a homomorphism $\mor{f^*_{A_1}}{X^*_{A_1}}{M_{A_1}}$, which after embedding into $\catw{ZFA}(\struct{A})$ yields a homomorphism $\mor{f^*}{X^*}{M}$.

For the uniqueness of $f^*$ consider another homomorphism $\mor{g^*}{X^*}{M}$ whose restriction to generators $X$ is equal to $f$. Proceeding like in the above, by the definition of 2-filtered colimits, we may find some finite $A_2$ such that both $f^*$ and $g^*$ factors as homomorphisms $f^*_{A_2}$ and $g^*_{A_2}$ in $\cont{\aut{\struct{A} \sqcup \word{A_2}}}$. Then $g_{A_2} = f_{A_2}$ implies that  $f^*_{A_2} = g^*_{A_2}$ and so $f^* = g^*$.
\end{proof}

Of course, the above can be also generalised to finitely generated cocartesian categories,  finitely generated bicartesian categories,  finitely generated elementary toposes, etc. Extended transfer principle allows us to transfer reasoning about concepts like languages in $\catw{ZFA}(\struct{A})$ to the reasoning in $\cont{\aut{\struct{A} \sqcup \word{A_0}}}$ for some finite $A_0 \subseteq A$.

\begin{remark}
According to Theorem~\ref{t:extended:transfer} every Lawvere theory with finitely many operations has free algebras. This is no longer true for Lawvere theories with infinitely many operations. For a counterexample consider the Lawvere theory consisting of infinitely many constants $(c_i)_{i \in \mathcal{N}}$. By the considerations like in the proof of Theorem~\ref{t:extended:transfer}, if the free algebra over the empty set existed in $\catw{ZFA}(\struct{A})$, it must have been an equivariant infinite set. There is, however, no symmetric function from an equivariant set to the set of atoms $A$ with infinitely many different values, and $A$ may be turned into an algebra by interpreting $(c_i)_{i \in \mathcal{N}}$ as different atoms.   
\end{remark}

\begin{remark}\label{r:zfa:topos}
As a consequence of Lemma~\ref{l:presentation} category $\catw{ZFA}(\struct{A})$ is an elementary topos with the topos structure inherited from $\cont{\aut{\struct{A} \sqcup \word{A_0}}}$. It is not, however, a Grothendieck topos in general: as we shall see in the next subsetion it may lack  many infinite colimits.
\end{remark}

\subsection{First order structures}\label{ss:first:order:structure} 
We shall say that an $A_0$-equivariant set $X \in \catw{ZFA}(\struct{A})$ is of finitary type if its canonical action has only finitely many orbits, i.e.~if the relation $x \equiv y \Leftrightarrow \exists_{\pi \in \aut{\struct{A} \sqcup \word{A_0}}} \; x = \pi \bullet y$ has finitely many equivalence classes. The reason behind this terminology is that $X$ is of a finitary type if it is compact when treated as an object of category $\cont{\aut{\struct{A} \sqcup \word{A_0}}}$. Let us recall the formal definition of a compact object in a general category with filtered colimits. 
\begin{definition}[Compact object]\label{d:compact}
An object $X$ of a category $\catl{C}$ is called compact if its co-representation ${\mor{\hom_{\catl{C}}(X, -)}{\catl{C}}{\catw{Set}}}$ preserves filtered colimits of monomorphisms.
\end{definition}

\begin{example}\label{e:compact}
Here are some examples of compact objects in toposes: 
\begin{itemize}
\item a set $X$ in classical $\catw{Set}$ is compact iff it is finite
\item a continuous $\group{G}$-set $X$ in $\cont{\group{G}}$ is compact iff it has finitely many orbits 
\item a function $\mor{X}{A}{B}$ thought of as an object in $\catw{Set}^{\bullet \rightarrow \bullet}$ is compact if its graph is finite --- i.e.~if $A$ and $B$ are finite sets
\item a chain of functions $(\mor{X_i}{A_i}{A_{i+1}})_{i \in N}$ thought of as an object in $\catw{Set}^{\bullet \rightarrow \bullet \rightarrow \bullet \dotsc}$ is compact if all $A_i$ are finite and the chain is eventually bijective
\end{itemize}
\end{example}

Observe that we cannot speak about compact objects in $\catw{ZFA}(\struct{A})$, because $\catw{ZFA}(\struct{A})$ does not have filtered colimits of monomorphisms. For a counterexample consider the chain:
$$\{\} \subset \{a_1\} \subset \{a_1, a_2\} \subset \{a_1, a_2, a_3\} \subset \cdots$$
where $a_i \in A$. This chain cannot have a colimit, since not every function $f$ with domain $A$ is symmetric, but every restriction of $f$ to a finite set is symmetric. Therefore, by Theorem~\ref{t:transfer}, the notion of a set of finitary type in $\catw{ZFA}(\struct{A})$ is a reflection of the notion of compacteness in $\cont{\aut{\struct{A} \sqcup \word{A_0}}}$ . Every set of finitary type is isomorphic to a set that is hereditarily of finitary type, therefore without loss of generality we can assume that all finitary sets are of this form.  We call a set in $\catw{ZFA}(\struct{A})$ ``definable with atoms'' if it is hereditarily of finitary type. The category of definable sets and functions with atoms will be denoted by $\catw{Def}(\struct{A})$, and its subcategory of $A_0$-equivariant sets by $\catw{Def}_{A_0}(\struct{A})$. 
\begin{theorem}[Presentation of $\catw{Def}(\struct{A})$]
Category $\catw{Def}(\struct{A})$ is the filtered colimit of categories $\catw{Def}_{A_0}(\struct{A})$ and natural embeddings for finite subsets $A_0 \subset A$. 
\end{theorem}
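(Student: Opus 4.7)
My plan is to show that the canonical comparison functor
\[
F \colon \varinjlim_{A_0 \in K(A)} \catw{Def}_{A_0}(\struct{A}) \longrightarrow \catw{Def}(\struct{A}),
\]
induced by the subcategory inclusions $\catw{Def}_{A_0}(\struct{A}) \hookrightarrow \catw{Def}(\struct{A})$, is an equivalence of categories. The colimit is taken over the directed poset $K(A)$ of finite subsets of $A$, and because every transition functor $\catw{Def}_{A_0}(\struct{A}) \hookrightarrow \catw{Def}_{A_1}(\struct{A})$ is a strict subcategory inclusion, the filtered colimit in $\catw{Cat}$ admits the concrete description in which objects are objects of some $\catw{Def}_{A_0}(\struct{A})$ and morphisms are identified as soon as they coincide at some common stage.

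Faithfulness of $F$ is automatic, since each inclusion $\catw{Def}_{A_0}(\struct{A}) \hookrightarrow \catw{Def}(\struct{A})$ is injective on hom-sets at the level of symmetric functions. For fullness, I take a morphism $f \colon X \to Y$ in $\catw{Def}(\struct{A})$ with $X \in \catw{Def}_{A_0}$ and $Y \in \catw{Def}_{A_1}$; as a symmetric function, $f$ has a finite support $A_f$, so $f$ lies already in $\catw{Def}_{A_2}(\struct{A})$ for any finite $A_2 \supseteq A_0 \cup A_1 \cup A_f$, and hence lifts to a morphism of the colimit.

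For essential surjectivity I argue by well-founded induction on the rank of $X \in \catw{Def}(\struct{A})$ in the cumulative hierarchy $V(\struct{A})$. Because $X$ is of finitary type, it carries a finite support $A_X$ and decomposes into finitely many orbits under $\aut{\struct{A} \sqcup A_X}$; I pick representatives $x_1,\dotsc,x_n$ of these orbits, each of strictly smaller rank. By the inductive hypothesis, every $x_i$ lies in $\catw{Def}_{A_i}(\struct{A})$ for some finite $A_i$, and then $A_0 := A_X \cup A_1 \cup \cdots \cup A_n$ is a finite set simultaneously supporting $X$ and all of its descendants, so $X \in \catw{Def}_{A_0}(\struct{A})$.

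The main obstacle is justifying that enlarging the stabiliser base from $A_X$ to $A_0$ preserves the \emph{finitely many orbits} condition: the acting group shrinks from $\aut{\struct{A} \sqcup A_X}$ to $\aut{\struct{A} \sqcup A_0}$, so each orbit of $X$ may split, and I must argue that it splits only finitely. This reduces to showing that the double-coset space $\aut{\struct{A} \sqcup A_0}\backslash \aut{\struct{A} \sqcup A_X}/\aut{\struct{A} \sqcup A_X}_x$ is finite for each orbit representative $x$, which follows from the fact that all three groups involved are open in the canonical topology (hence of finite index in an appropriate sense) combined with the continuity of the action witnessing finitary type. Once this technical lemma is secured, the inductive step of essential surjectivity goes through and the theorem follows directly from the description of filtered colimits in $\catw{Cat}$.
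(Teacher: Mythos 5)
The paper states this theorem without proof (it is the $\catw{Def}$-analogue of Lemma~\ref{l:presentation}, which is likewise left unproved), so your argument can only be judged on its own terms. Your overall strategy --- exhibit the comparison functor out of the concrete filtered colimit in $\catw{Cat}$ and check faithfulness, fullness and essential surjectivity --- is the right one, and the first two checks (via finite supports of symmetric functions) are fine.

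The genuine gap is the ``technical lemma'' on which you rest essential surjectivity. Finiteness of the double-coset space $\aut{\struct{A}\sqcup A_0}\backslash \aut{\struct{A}\sqcup A_X}/\aut{\struct{A}\sqcup A_X}_x$ does \emph{not} follow from the three groups being open: open subgroups of a non-compact topological group need not have finite index in any useful sense, and finiteness of double cosets of open subgroups is precisely the coherence/Roelcke-precompactness condition of Subsection~\ref{ss:classifying:topos}, which fails for general $\struct{A}$. Concretely, for $\struct{A}=(\mathbb{Z},\mathrm{succ})$ the canonical topology on $\aut{\struct{A}}\cong\mathbb{Z}$ is discrete, the set of atoms is equivariant with a single orbit, yet it has infinitely many orbits under the trivial group $\aut{\struct{A}\sqcup\{0\}}$; so orbits can split infinitely and your lemma is false. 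Fortunately it is also unnecessary: membership of $X$ in $\catw{Def}_{A_0}(\struct{A})$ only requires that $X$ be definable and $A_0$-equivariant --- ``hereditarily of finitary type'' is a property of $X$ quantified existentially over supports, not a condition relative to the chosen $A_0$ --- and $A_0$-equivariance for some finite $A_0$ is immediate from $X$ being symmetric in the canonical topology. The rank induction is likewise not needed, and your intermediate claim that $A_X\cup A_1\cup\dots\cup A_n$ supports ``all of the descendants'' of $X$ is also false (take $X=N$ in the basic Fraenkel--Mostowski model: no finite set supports every atom); it supports only the finitely many chosen orbit representatives. Finally, note that if one instead insists on reading $\catw{Def}_{A_0}(\struct{A})$ as the $A_0$-equivariant sets with hereditarily finitely many $\aut{\struct{A}\sqcup A_0}$-orbits, then already the transition functors of the diagram require your double-coset finiteness, and the theorem would hold only under a coherence hypothesis on $\aut{\struct{A}}$ that the paper does not impose here.
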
   

Blass and Scedrov in \cite{blass1983boolean} proved that $\cont{\aut{\struct{A}}}$ is a coherent topos if and only if $\struct{A}$ is $\omega$-categorical. Moreover, in such a case first order definable subsets\footnote{Definability means here: ``definable in the theory of $\struct{A}$ extended with elimination of imaginaries''. For more details see the next subsection.} of $\struct{A}$ coincide (up to isomorphisms) with compact objects in $\cont{\aut{\struct{A}}}$. The last statement is a very special case of the characterisation theorem for coherent toposes by Alexander Grothendieck and we return to it in Section~\ref{ss:classifying:topos}. We point out, that one direction of this theorem for oligomorphic structures was recently rediscovered in \cite{DBLP:conf/lics/BojanczykKL11}.

Let us recall that by Ryll-Nardzewski theorem \cite{ryll1959categoricity}, a structure $\struct{A}$ (in a countable language) is $\omega$-categorical if and only if for every $k$, there are only finitely many non-equivalent formulas with $k$ free variables. By the above considerations, this can be equivalently expressed by the following property of $\cont{\aut{\struct{A}}}$: every compact object has only finitely many subobjects; or by the property of $\catw{ZFA}(\struct{A})$: every set of finitary type has only finitely many $A_0$-equivariant subsets (for every finite $A_0 \subseteq A$). This property allows for effective algorithms in $\catw{ZFA}(\struct{A})$ on sets of finitary type. This has been observed in \cite{DBLP:conf/lics/BojanczykKL11}. If one is careful to use only elementary topos operations in algorithms then, because every algorithm is finite, it can be executed in $\cont{\aut{\struct{A} \sqcup \word{A_0}}}$ and by transfer principle its outcome transfers to $\catw{ZFA}(\struct{A})$. Notice however, that a power set of a set of finitary type needs not be of a finitary type. Therefore, one should further restrict to the operations that are stable under definability, i.e. one may use: finite limits, finite coproducts and coequalisers of kernel pairs (i.e.~quotient sets), Boolean operations on definable subsets, images, inverse images and dual images of definable sets under definable functions.
    
Let us assume that $\struct{A}$ is $\omega$-categorical with a decidable theory. Algorithm~\ref{a:reachability} can be run on a definable relation $E$ and two elements $a, b \in \catw{Def}(\struct{A})$. Moreover, it always terminates on such inputs. Its run can be seen as a computation of a partial transitive-reflexive closure of a relation $E$ . By transfer principle, we can assume that the inputs are equivariant. Then there is a formula $\phi$ that defines $E$, and all relations: $\id{}, E, E^2, \dotsc$ have the same context as $\phi$. Thus there are only finitely many different $E^k$ and the process terminates after finitely many steps.
We are generally interested in the structures $\struct{A}$ with the property that algorithms like Algorithm~\ref{a:reachability} can be effectively realized. Unfortunately, if $\struct{A}$ is not $\omega$-categorical, then such a problem is not even well-defined, because the correspondence between sets definable in the theory of $\struct{A}$ and sets of finitary types fails badly: there is no longer a correspondence between complete types over $\struct{A}$ and orbits of $\aut{\struct{A}}$; moreover, finite sets of types does not correspond to formulas (for more details consult \cite{hodges1993model} Chapter~10). We shall see later in Section~\ref{ss:classifying:topos} that equivariant definable sets $\catw{Def}_\emptyset(\struct{A})$ can be recovered from the classifying topos of the first order theory of $\struct{A}$ as the full subcategory on a special type of compact objects called \emph{coherent}. Example~\ref{e:compact} shows that compact objects in $\cont{\aut{\struct{A}}}$ can have only finitely many orbits, therefore there cannot be a one-to-one correspondence between compact (nor coherent) objects in $\cont{\aut{\struct{A}}}$ and equivariant definable sets $\catw{Def}_\emptyset(\struct{A})$ for non-$\omega$-categorical $\struct{A}$. In fact, Blass and Scedrov (see Section~\ref{ss:classifying:topos}) showed that the classifying topos for the first order theory of $\struct{A}$ cannot be even Boolean unless $\struct{A}$ is $\omega$-categorical.     

Instead of diving into classifying toposes, which for general structures may be difficult to describe, we may like to reverse our thinking and treat definable algorithms as formulas themselves: in the sense of dynamic logic. Then the question about effective realisation of algorithms turns into the question of decidability of the first order theory extended with dynamic logic of structure $\struct{A}$. Let us focus on the following well-understood fragment of dynamic logic:  $\mu$-calculus --- i.e.~extension of first order logic with the least fixed-point operator. That is, together with the usual first order formulas, we also have formulas of the form:
$\mu X[\overline{y}] . \phi(X, \overline{y})$,
where $X$ (must occur positively in $\phi$) is a ``predicate'' variable of arity equal to the length of sequence of ``parameters'' $\overline{y}$. The semantics of this formula (in a given algebraic structure) is the least set $X^*$ such that: $X^*(\overline{y})  \Leftrightarrow \phi(X^*, \overline{y})$. For example, if $E$ is a formula representing a binary relation, then:
$$\mu X[y_1, y_2] .\; (y_1 = y_2) \vee (\exists_z E(y_1, z) \wedge X(z, y_2))$$
defines the transitive-reflexive closure of $E$.
The least fixed point is the union of the following sequence defined by transfinite induction:
\begin{itemize}
\item $X_0(\overline{y}) = \bot$
\item $X_{\alpha+1}(\overline{y}) =  \phi(X_\alpha, \overline{y})$
\item $X_{\lambda}(\overline{y}) =  \bigvee_{\alpha < \lambda} \phi(X_\alpha, \overline{y})$ if $\lambda$ is a limit ordinal
\end{itemize}
Because the above sets (called the stages of computation) are bounded by the context of $\overline{y}$, the transfinite sequence must stabilize at some ordinal $\alpha$, in which case we put $X^* = X_\alpha$. In particular, if $\struct{A}$ is countable, which is the case of computations in set with atoms (see \cite{DBLP:conf/lics/BojanczykKL11}, \cite{DBLP:conf/popl/BojanczykBKL12}, \cite{DBLP:conf/lics/BojanczykKLT13}), $\alpha$ is bounded by a finite ordinal, and the computation of the least fixed-point can be realized by a standard while-program. Therefore, the least fixed-points are computable by while-programs iff all stages of computations stabilize after finitely many steps. This is the case of an $\omega$-categorical structure $\struct{A}$ (since then, in any context $\overline{y}$ there are only finitely many definable sets). Unfortunately, this is a general phenomenon in case of decidability.

\begin{theorem}[$\mu$-elimination in decidable theories]\label{t:stage}
Let $\struct{A}$ be a first order structure. If the first order theory extended by the least fixed-point operator of $\struct{A}$ is decidable, then every fixed point formula in $\struct{A}$ is first order definable.
\end{theorem}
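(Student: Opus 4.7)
My plan is to argue by contradiction: assume a fixed-point formula $\psi(\overline{y}) = \mu X[\overline{y}].\phi(X, \overline{y})$ is not first-order definable in $\struct{A}$, and derive that the $\mu$-theory of $\struct{A}$ encodes an undecidable problem — contradicting the hypothesis.

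First I would observe that each finite stage $\phi^n$, obtained by $n$-fold substitution of $\phi$ into itself starting from $\bot$, is first-order definable by induction on $n$. If $\phi^{n+1} \equiv \phi^n$ held in $\struct{A}$ for some $n$, then $\phi^n$ would already be a fixed point of $\phi$; by the minimality of $\mu$ combined with $\phi^n \subseteq \psi$, this would force $\psi \equiv \phi^n$, contradicting the assumption. Hence under the assumption the stages form a strictly ascending chain $\phi^0 \subsetneq \phi^1 \subsetneq \phi^2 \subsetneq \cdots$ of first-order definable subsets of $\struct{A}^k$.

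The heart of the argument is to use this strictly ascending stage hierarchy to interpret arithmetic inside the $\mu$-theory. I would invoke Moschovakis' stage-comparison construction to $\mu$-define the pre-order $\overline{y} \preceq \overline{y}'$ meaning ``the stage of $\overline{y}$ in the computation of $\psi$ is $\leq$ that of $\overline{y}'$'' as a nested fixed point built from $\phi$. Quotienting by the equivalence ``same stage'' then yields a $\mu$-definable well-order whose length is at least $\omega$. On any $\mu$-definable well-order of length $\geq \omega$, $\mu$-recursion is already enough to simulate Turing machines: for a fixed machine $M$ one defines the configuration of $M$ after $n$ steps by $\mu$-recursion on the order, and then expresses halting by an existential quantifier over the order. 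Equivalently, addition and multiplication on the $\omega$-initial part of the well-order are definable by the standard primitive-recursive clauses, interpreting the standard model $(\mathbb{N}, 0, S, +, \cdot)$ inside the $\mu$-theory of $\struct{A}$. Since the theory of this model is undecidable by G\"{o}del--Rosser, the $\mu$-theory of $\struct{A}$ must be undecidable — the desired contradiction.

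The main technical obstacle I anticipate is the interpretation step: verifying that Moschovakis' stage comparison goes through verbatim in the paper's precise formulation of $\mu$-calculus, and arranging $\mu$-recursion on the resulting well-order to actually formalise arithmetic. In particular, the stage norm on $\psi$ may well extend past $\omega$, and the ``finite-stage'' part $\bigcup_{n<\omega}\phi^n$ is not $\mu$-definable as a single set, so the encoding of $+$ and $\cdot$ must be performed directly on the full definable well-order rather than by isolating an $\omega$-initial segment. A cleaner alternative, should the stage-comparison route prove unwieldy, is to bypass arithmetic entirely and simulate a two-counter machine directly using nested $\mu$-operators, with the stages supplying the unbounded counter.
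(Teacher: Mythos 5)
Your proposal follows essentially the same route as the paper's proof: derive a strictly ascending infinite chain of first-order-definable stages from the failure of first-order definability, invoke Moschovakis's stage-comparison theorem to $\mu$-define the stage preorder, define zero, successor, addition and multiplication on the stages by least fixed points, and conclude undecidability by interpreting arithmetic. The concerns you raise about the closure ordinal exceeding $\omega$ are reasonable but are handled implicitly in the paper exactly as you suggest, by defining the arithmetic operations directly via $\mu$-recursion over the stage order rather than isolating an $\omega$-initial segment.
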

\begin{proof}
We shall show that if a theory is decidable then there cannot be any formula $\mu X[\overline{y}] . \phi(X, \overline{y})$ such that there is an infinite ascending chain $X_0 \subset X_1 \subset X_2 \subset \cdots$. But in such a case, $X^* = X_k$ for some finite $k$, thus $X^*$ is first order definable.

To complete the proof, let us assume that there is a formula $\mu X[\overline{y}] . \phi(X, \overline{y})$ with an infinite ascending chain $X_0 \subset X_1 \subset X_2 \subset \cdots$. Moschovakis's stage comparison theorem \cite{moschovakis2014elementary} (see also \cite{moschovakis1974nonmonotone} and \cite{kreutzer2004expressive}) says that the relation $\overline{y} \leq \overline{y'} \Leftrightarrow (X_\beta(\overline{y'}) \rightarrow X_\beta(\overline{y}))$ is also definable in $\mu$-calculus. Let us write $\overline{y} \equiv \overline{y'}$ for $\overline{y} \leq \overline{y'} \wedge \overline{y'} \leq \overline{y}$ and $\overline{y} < \overline{y'}$ for $\overline{y} \leq \overline{y'} \wedge \overline{y'} \not\leq \overline{y}$. Then we can define natural numbers up to $\equiv$-equivalence in the following way:
\begin{itemize}
\item $\word{zero}(\overline{x})\colon \forall_{\overline{y}}\; {\overline{y} \leq \overline{x}} \Rightarrow \overline{x} \equiv \overline{y}$
\item $\word{succ}(\overline{x}, \overline{y}) \colon \overline{x} < \overline{y} \wedge \forall_{\overline{z}}\; \overline{x} < \overline{z} \Rightarrow \overline{y} \leq \overline{z}$
\end{itemize}
Because the chain $X_0 \subset X_1 \subset X_2 \subset \cdots$ is infinite, every number has its successor, i.e.: $\forall_{\overline{x}} \exists_{\overline{y}} \colon \word{succ}(\overline{x}, \overline{y})$.
The addition relation on these numbers can be expressed as the following least fixed point:
\begin{multline*}
\word{add}(\overline{x}, \overline{y}, \overline{z}) \colon \\
\mu A[\overline{x},\overline{y},\overline{z}] . (\word{zero}(\overline{y}) \wedge \overline{x} \equiv \overline{z}) \vee \exists_{\overline{p}, \overline{s}} \word{succ}(\overline{p}, \overline{y}) \wedge \word{succ}(\overline{x}, \overline{s}) \wedge A(\overline{s}, \overline{p}, \overline{z})
\end{multline*}
Similarly, the multiplication may be defined by the formula:
\begin{multline*}
\word{mul}(\overline{x}, \overline{y}, \overline{z}) \colon \\
\mu M[\overline{x},\overline{y},\overline{z}] . (\word{zero}(\overline{y}) \wedge \word{zero}(\overline{z})) \vee \exists_{\overline{p}, \overline{s}} \word{succ}(\overline{p}, \overline{y}) \wedge M(\overline{x}, \overline{p}, \overline{s}) \wedge \word{add}(\overline{x}, \overline{s}, \overline{z})
\end{multline*}
What makes the theory undecidable.
\end{proof}

The above theorem says that a decidable theory has to eliminate least fixed-points operators. As mentioned in the above, a practical consequence of this fact is that if $\struct{A}$ is a general countable structure, then if an algorithm over $\struct{A}$ can be effectively realized, then its result must be already present in the first order theory of $\struct{A}$. 
We claim that it is the consequence of Theorem~\ref{t:stage} that is crucial for effective realisation of the algorithms, rather than a much stronger property of $\omega$-categoricity. This claim leads us to the following definition.
\begin{definition}[Locally $\omega$-categorical theory]\label{d:locally:omegacat}
Let $T$ be a (not necessarily complete) first order theory. Then we say that $T$ is locally $\omega$-categorical if every finite set of formulas in $T$ closed under logical connectives of first order logic yields a finite set of formulas up to equivalence in $T$.
\end{definition}
Of course, $\omega$-categorical theories are locally $\omega$-categorical. Theories from Example~\ref{e:pure:sets:constants} and Example~\ref{e:dense:order} are locally $\omega$-categorical, but not $\omega$-categorical. We have the following theorem.
\begin{theorem}\label{t:locally:eliminates}
Let $T$ be a decidable locally $\omega$-categorical theory. Then $T$ eliminates least fixed-points.
\end{theorem}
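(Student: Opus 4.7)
The plan is to induct on the nesting depth of $\mu$-operators in the input formula. When there are no $\mu$-operators the formula is already first-order, and in the inductive step I apply the induction hypothesis to the innermost $\mu$-subformulas, which reduces the general case to the following core task: given a first-order formula $\phi(X, \overline{y})$ positive in $X$, produce a first-order formula $T$-equivalent to $\mu X[\overline{y}].\phi(X, \overline{y})$.

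For the core task I would analyse the stages of computation $X_0 = \bot$ and $X_{n+1} = \phi[X_n/X]$ introduced just before Theorem~\ref{t:stage}. Positivity of $X$ in $\phi$ makes this an ascending chain whose ordinal supremum is the least fixed point. The central claim is that, modulo $T$-equivalence, only finitely many of the $X_n$ are distinct. To establish it, I would let $S$ denote the finite set of atomic subformulas of $\phi$ other than $X$, and prove by induction on $n$ that every $X_n$ belongs to the first-order closure of $S$: the base case $X_0 = \bot$ is immediate, and in the inductive step, substituting an element of this closure for $X$ in $\phi$ again yields a formula built from $S$ by first-order connectives. Local $\omega$-categoricity (Definition~\ref{d:locally:omegacat}) applied to the finite set $S$ then tells us that this closure is finite modulo $\equiv_T$, so an increasing chain taking values in it must stabilize at some finite index $n$.

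Once $T \vdash X_n \leftrightarrow X_{n+1}$ is secured, a standard fixed-point argument finishes the proof: $X_n$ is then a bona fide fixed point of the monotone operator associated to $\phi$ in every model of $T$, and because it is contained in every fixed point by its iterative construction, it is the least such, so $\mu X[\overline{y}].\phi(X, \overline{y}) \equiv_T X_n$. Decidability of $T$ is what makes the whole procedure effective: it allows one to detect algorithmically when two successive stages are $T$-equivalent and thereby locate the stabilization index, producing the first-order witness by a terminating computation rather than merely asserting its existence.

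The step I expect to be most delicate is the bookkeeping needed to verify that every stage $X_n$ really does lie in the first-order closure of the finite seed $S$ to which Definition~\ref{d:locally:omegacat} is applied, since the inductive substitution places $X_n$ inside $\phi$ at positions that may sit under quantifiers and under further occurrences of $X$; one must also be attentive to the fact that the definition of local $\omega$-categoricity bounds the closure under \emph{all} first-order connectives, including quantification. Once this bookkeeping is in place, the remaining ingredients — monotonicity of the stages, stabilization from finiteness modulo $\equiv_T$, and the effectivity argument via decidability — are essentially routine.
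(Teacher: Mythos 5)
Your proof is correct and follows essentially the same route as the paper's: each stage $X_n$ lies in the first-order closure of the finitely many atomic subformulas of $\phi$, local $\omega$-categoricity makes that closure finite modulo $\equiv_T$, and the ascending chain therefore stabilizes at a finite index, which is then the least fixed point. The paper's proof is a terser version of exactly this argument; your additions (the induction on $\mu$-nesting depth, the explicit verification that the stabilized stage is least, and the remark that decidability is what makes the stabilization index effectively computable) are elaborations of steps the paper leaves implicit rather than a different approach.
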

\begin{proof}
Let $\mu X[\overline{y}] . \phi(X, \overline{y})$ be a formula such that $\phi$ is first order. Then for each finite $i$ The $i$-th stage of computation of $\mu X[\overline{y}] . \phi(X, \overline{y})$ is definable by a first order formula, which is a first order combination of atomic formulas from $\phi$. But there are only finitely many of such formulas by local $\omega$-categoricity of $T$. Therefore, $X_i = X_{i+1} = \mu X[\overline{y}] . \phi(X, \overline{y})$ for some finite $i$.
\end{proof}
In fact, if $T$ is a decidable locally $\omega$-categorical theory, then every while-program terminates on $T$-definable sets. We will elaborate more on $T$-definable sets in non-complete theories in the next subsection.

In the reminder we shall consider positive existential theories that satisfy some weaker versions of the conclusion of Theorem~\ref{t:stage}. The pay-off for such generality is that we lose correspondence between sets with atoms and definable sets --- the role of $\cont{\aut{\struct{A}}}$ will be played by the classifying topos $\catw{Set}(T)$ of a positive existential theory $T$. Note, however, that because of Theorem~\ref{t:transfer} this lost is not that severe. %Definability in positive existential theories is the concern of the next subsection.

%One way of getting around Theorem~\ref{t:stage} is to weaken assumptions on the logic. We shall do this in two ways: first, instead of complete first order theories, we may consider general coherent theories (i.e.~positive-existential fragments of first order theories); secondly we further restrict admissible algorithms disallowing the full power of $\mu$-calculus.  

\subsection{Positive existential theories}\label{ss:coherent:theory}
If $\struct{A}$ is a single-sorted algebraic structure, then there is a one-to-one correspondence between first order definable subsets of $A^k$ and first order formulas in the language of $\struct{A}$ up to equivalence modulo the theory $\word{Th}(\struct{A})$ of $\struct{A}$. The reason for this is that in a complete theory (and $\word{Th}(\struct{A})$ is clearly complete) two formulas are equivalent iff they have the same interpretation in any model of the theory. Of course, the same is true if move to multi-sorted structures $\struct{A}$, with the obvious correction that we have to consider definable subsets of $\prod_{i \in I_0} A_i$, where $I_0$ is a finite subset of indices of sorts of $\struct{A}$. It is tempting then to extend the notion of definability from structures $\struct{A}$ to non necessarily complete theories $T$ in the following way: we say that the class of formulas $\phi$ up to equivalence modulo $T$ is a $T$-definable set. With one caveat: we are not interested in all first order formulas, but in the formulas from a restricted fragment of \emph{intuitionistic} first order logic, called positive existential logic, which we shall formally define now. Let $(X_i)_{i \in I}$ be a set of variables. Positive existential formulas in variables $(X_i)_{i \in I}$ over signature $\Sigma$ with sorts $(A_i)_{i \in I}$ are defined inductively according to the following rules:
\begin{itemize}
\item $\top, \bot$
\item $R(t_1, t_2, \dotsc, t_k)$ for a relation symbol $R \subseteq {A_{i_1} \times \cdots \times A_{i_k}}$ in $\Sigma$ and terms $t_1 \colon A_{i_1}, \dotsc, t_k \colon A_{i_k}$  in $\Sigma$
\item $t = q$ for terms $t, q$ over the same sort in $\Sigma$
\item $\phi \land \psi$, $\phi \lor \psi$ for formulas $\phi, \psi$
\item $\exists_{x \in X_i} \phi$ for a formula $\phi$ and a variable $x \in X_i$
\end{itemize}
The reason for this restriction is that it gives us much more flexibility in deciding which sets are definable, and which are not. Let us also recall that on the syntactic level one may substitute a classical first order theory with a positive existential theory having the same definable sets. The idea is to introduce two new relational symbol $P_\phi$ and $N_\phi$ for every first order formula $\phi$, then force $P_\phi$ to be equivalent to $\phi$ and $N_\phi$ to its negation. This process is known under the name ``atomisation'', or ``Morleyisation'' (see \cite{hodges1993model} Chapter~2 or \cite{johnstone2003sketches} Chapter~D1.5). Therefore, one may recover the full power of classical first order logic in positive existential logic.

\begin{example}[Infinite decidable objects]\label{e:infinite:decidable}
The first order theory of pure sets from Example~\ref{e:pure:sets} is equivalent to the following positive existential theory, called the theory of infinite decidable objects. The theory is over signature with a single sort $N$ and one binary relation ${\neq} \subseteq N \times N$ and consists of the following axioms:
\begin{itemize}
\item $a \neq b \wedge a = b \vdash \bot$
\item $\vdash a \neq b \vee a = b$
\item $\exists_{x_1} \exists_{x_2} \cdots \exists_{x_n} x_1 \neq x_2 \wedge \cdots \wedge x_i \neq x_j \cdots \wedge x_{n-1} \neq x_n$
\end{itemize}
The first two axioms say that relation ${\neq}$ is complemented by the equality relation ${=}$. The last axiom scheme describes an infinite sequence of axioms, whose $n$-th axiom says that there are at least $n$ different elements.
\end{example}
If $\phi$ is a definable set in the context $\prod_{i \in I_0} A_i$ and $\psi$ is a definable set in the context $\prod_{i \in I_1} B_i$, then a definable function $\mor{f}{\phi}{\psi}$ from $\phi$ to $\psi$ is a definable set in the context $\prod_{i \in I_0} A_i \times \prod_{i \in I_1} B_i$ satisfying the following (positive existential) axioms:
\begin{eqnarray*}
f(\overline{x}, \overline{y}) &\vdash & \phi(\overline{x}) \wedge \psi(\overline{y}) \\
\phi(\overline{x}) &\vdash & \exists_{\overline{y}}\; f(\overline{x}, \overline{y}) \\
f(\overline{x}, \overline{y_1}) \wedge f(\overline{x}, \overline{y_2}) &\vdash & \overline{y_1} = \overline{y_2}
\end{eqnarray*}
\begin{definition}\label{d:definable:sets}
Let $T$ be a positive existential theory. By $\catw{Def}(T)$ we shall denote the category of $T$-definable sets and $T$-definable functions with natural identities and compositions.
\end{definition}
Category $\catw{Def}(T)$ is a coherent category, i.e.~it has finite limits, stable existential quantifiers and stable unions of subobjects. Moreover, in case $T$ is a classical first order theory, $\catw{Def}(T)$ also has stable universal quantifiers and is Boolean (i.e.~it is Boolean Heyting category). 

Unfortunately, $\catw{Def}(T)$ may lack finite disjoint coproducts in general. It has been observed by Makkai and Reyes \cite{makkai2006first} that any positive existential theory $T$ can be extended to a positive existential theory $T^\sqcup$ in such a way that $T$ and $T^\sqcup$ have essentially the same models and $T^\sqcup$-definable sets admit disjoint coproducts. Their construction follows an intuitive idea of ``encoding'' disjoint coproducts directly in the language of the theory. Theory $T^\sqcup$ is obtained from $T$ by extending the signature of $T$  with a new sort $\coprod_{i \in \lambda} A_i$ for every finite cardinal $\lambda$, together with new functional symbols $\mor{\iota_j}{A_j}{\coprod_{i \in \lambda} A_i}$ for every $j \in \lambda$ and introducing  axioms expressing that $\coprod_{i \in \lambda} A_i$ are disjoint coproducts with injections $\iota_j$:
\begin{eqnarray*}
 \iota_i(\overline{x}) = \iota_i(\overline{y}) & \vdash &  \overline{x} = \overline{y}\\
\iota_i(\overline{x}) = \iota_j(\overline{y}) & \vdash  & \bot \,\;\;\;\;\;\;\;\;\;\;\;\;\;\;\;\;\;\;\;\;\;\;\;\;\;\;\;\;\textit{for all $i \neq j$}\\
& \vdash & \bigvee_{i \in \lambda} \exists_{\overline{x} \in A_i} \iota_i(\overline{x}) = \overline{z} \;\;\;\;\;\textit{for $\overline{z} \colon \coprod_{i \in \lambda} A_i$}\\
\end{eqnarray*}
It is routine to check that category $\catw{Def}(T^\sqcup)$ has disjoint coproducts.

There is one more important set theoretic construction that may be missing in $\catw{Def}(T)$. Let $R$ be an equivalence relation on a set $X$. Then, one may form the quotient set $X/R$ of $X$ by $R$:
$$X/R = \{ \tuple{x, \{y \colon R(x, y)\}} \colon x \in X\}$$
Moreover, there is also a canonical surjection $\mor{[-]}{X}{X/R}$ sending an element of $X$ to its abstraction class:
$$[x] = \tuple{x, \{y \colon R(x, y)\}}$$
We call $X/R$ together with surjection $\mor{[-]}{X}{X/R}$ an \emph{effective quotient}. A similar trick to the construction of disjoint coproducts, also works for effective quotients, and in fact is much older. In 1978 Saharon Shelah working on stability theory introduced the concept of elimination of imaginaries \cite{shelah1990classification}. An imaginary element of a theory $T$ is an equivalence class of elements that satisfy a given equivalence formula. A theory $T$ is said to eliminate imaginaries if every imaginary element can be treat as a genuine element of $T$. Saharon Shelah defined a process of associating with a theory $T$ another theory $T^{eq}$ in such a way that both $T$ and $T^{eq}$ have the same models, and $T^{eq}$ eliminates imaginaries. In more details, the language of $T^{eq}$ extends the language of $T$ for every equivalence relation $\phi(\overline{x}, \overline{y})$, where $\overline{x}, \overline{y} \in \prod_i A_i$ by a new sort $A_\phi$ together with a new functional symbol $\mor{e_\phi}{\prod_i A_i}{A_\phi}$.
The theory $T^{eq}$ consists of the axioms of $T$, plus additional axioms expressing that $e_\phi$ is a surjective function onto the set of equivalence classes of $\phi$:
\begin{eqnarray*}
&\vdash & \exists_{\overline{x}} e_\phi(\overline{x}) = \overline{y}\\
\phi(\overline{x}, \overline{y}) &\vdash & e_\phi(\overline{x}) = e_\phi(\overline{y})\\
e_\phi(\overline{x}) = e_\phi(\overline{y}) &\vdash & \phi(\overline{x}, \overline{y})\\
\end{eqnarray*}
Again, it is routine to check that category $\catw{Def}(T^{eq})$ has effective quotients. Moreover, $(-)^{eq}$ preserves $(-)^{\sqcup}$ construction, therefore $\catw{Def}(T^+)$ for $T^+ = {T^{\sqcup}}^{eq}$ has disjoint coproducts and effective quotients. The process of constructing $\catw{Def}(T^+)$ from $\catw{Def}(T)$ is known in category theory under the name \emph{pretopos completion}. Furthermore, $T^+$ is the maximal tight extension of $T$, therefore we cannot further extend $T^+$ without changing the notion of a model.

If models of a first order theory $T$ satisfy the following properties: (a) every sort of every model is non-empty, (b) there is at least one sort that has at least two elements in every model; then the category of $T^{eq}$-definable objects $\catw{Def}(T^{eq})$ automatically has disjoint coproducts. In particular, observe that all non-trivial $\omega$-categorical theories are essentially of this type.

\begin{example}[Definablility in an $\omega$-categorical theory]
If $T$ is a classical $\omega$-categorical theory with an infinite model $\struct{A}$, then $\catw{Def}(T^+)$ is equivalent to $\catw{Def}_\emptyset(\struct{A})$. The reason for this equivalence is that sets in $\catw{Def}_\emptyset(\struct{A})$ can be \emph{nested}, i.e.~one may form sets like:
\begin{align*}
&\{\{a, b\} \colon a, b \in A \} \\
&\{\tuple{a, \{b \in A \colon \phi(a, b)\}} \colon a \in A \wedge \psi(a) \}
\end{align*}
If we allow for nested-set definitions, then we may forget about supplying $T^\sqcup$ with imaginary elements. This is because, the set of quotients of an equivalence relation $\phi(\overline{x}, \overline{y})$ on $X$ may be represented as:
\begin{align*}
\{\tuple{\overline{x}, \{\overline{y} \in X \colon \phi(\overline{x}, \overline{y})\}} \colon \overline{x} \in X \}
\end{align*}
Conversely, if $\phi(\overline{x}, \overline{y})$ is \emph{any} formula, then one may form an equivalence formula: $$\widehat{\phi}(\overline{x}, \overline{x}') = \forall_{\overline{y}} \phi(\overline{x}, \overline{y}) \leftrightarrow \phi(\overline{x}', \overline{y})$$ and represent $\{\overline{y} \colon \phi(\overline{x}, \overline{y})\}$ by an imaginary element of $\widehat{\phi}(\overline{x}, \overline{x}')$ of $T^+$. Then by induction over structure of nested-sets one can show that every nested-definable set is $T^+$-definable.

The above is also a consequence of the characterisation theorem for Boolean coherent toposes discussed in the next subsection. 
\end{example}

\subsection{Classifying topos}\label{ss:classifying:topos}
One important connection between Grothendieck toposes and logic is through the concept of classification. The general statement says that, for every logical theory $T$ formalized in a positive existential fragment of infinitary first order logic there is a Grothendieck topos $\catw{Set}(T)$, and a generic model $\struct{M}_T$ of $T$ inside $\catw{Set}(T)$. The term \emph{generic} means that \emph{every} model of the theory in \emph{any} Grothendieck topos can be obtained from $\struct{M}_T$ as an application of the inverse image part of some geometric morphism into $\catw{Set}(T)$. Roughly speaking, a generic model of a theory is a well-behaved model that contains all of the information about the theory. The topos $\catw{Set}(T)$ is called the classifying topos for $T$. Moreover, the above general statement is definitive, because every Grothendieck topos arises as the classifying topos for some positive existential fragment of infinitary first-order theory.

% For more information about Grothendieck toposes

In this paper we are interested in theories $T$ defined in positive existential fragment of \emph{finitary} first order logic. In such a case, the classifying topos $\catw{Set}(T)$ is called \emph{coherent topos} and can be obtained as the topos of sheaves on $\catw{Def}(T)$ with the usual coherent topology (i.e.~topology generated by finite jointly regular-epimorphic families of morphisms; for more information consult \cite{johnstone2003sketches} Chapter~D3, especially Section~D3.3, or Volume~3 of \cite{borceux}) Moreover, the categories of sheaves on $\catw{Def}(T)$ and on $\catw{Def}(T^+)$ are equivalent, i.e.~$\catw{Set}(T) \approx \catw{Set}(T^+)$.
\begin{example}[Sierpienski topos]\label{e:sierpienski:topos}
The classifying topos of the propositional theory from Example~\ref{e:prop:theory} is the Sierpienski topos $\catw{Set}^{\bullet \rightarrow \bullet}$ --- i.e.~the topos of sheaves on the Sierpienski space $\Sigma$. To see this from the perspective of the definable sets, observe that there are exactly three definable sets in this theory: corresponding to the false formula $\bot$, to nullary relation $p$ itself and to the true formula $\top$. Moreover, the coherent topology on the definable sets is the same as the topology of the Sierpienski space $\{0, 1\}$, whose open sets are $\bot = \emptyset$, $p = \{0\}$ and $\top = \{0, 1\}$.
\end{example}

\begin{example}[Impossible topos]\label{e:impossible:topos}
The classifying topos of the seemingly impossible theory of Example~\ref{e:impossible:theory} is the presheaf topos $\catw{Set}^{\bullet \rightarrow \bullet \rightarrow \bullet \dotsc}$. 
Like in Example~\ref{e:sierpienski:topos} this topos can be presented as a topos of sheaves on a suitable topological space.
\end{example}
Let us recall the definition of a \emph{coherent object}.
\begin{definition}[Coherent object]
An object $A$ in a category with filtered colimits and kernel pairs is coherent if it is compact and for every morphisms $\mor{f}{B}{A}$ from a compact object $B$ the kernel $\word{Ker}(f)$ of $f$ is compact.  
\end{definition}
It is a classical result of Alexander Grothendieck that $\catw{Def}(T^+)$ can be recovered from the classifying topos $\catw{Set}(T)$ as the full subcategory spanned on coherent objects (see Corollary~3.3.8 in Chapter~D of \cite{johnstone2003sketches}).

\begin{example}\label{e:coherent}
Continuing Example~\ref{e:compact}: all compact objects in $\catw{Set}$, $\catw{Set}^{\bullet \rightarrow \bullet}$ and $\catw{Set}^{\bullet \rightarrow \bullet \rightarrow \bullet \dotsc}$ are coherent. More generally, in a coherent topos, compact objects coincide with coherent objects iff every sub-compact object is compact. Therefore, they coincide in $\cont{\group{G}}$  iff $\group{G}$ is (equivalent to) a group of automorphism of an $\omega$-categorical structure. For a counterexample, consider $\catw{Set}^{\group{G}} = \cont{\group{G}}$ for a discrete group $\group{G}$. Coherent objects in $\catw{Set}^{\group{G}}$ are these sets $X$ with finitely many orbits whose stabilisers $\{\pi \in \group{G} \colon \pi \bullet x = x \}$ are finite at every $x \in X$.  
\end{example}

During 1974-1975 Walter Roelcke in a course on topology at the University of Munich introduced and systematically developed the theory of four natural uniform structures (or uniformities) on topological groups \cite{trove.nla.gov.au/work/25027833}. The lower (infinium) uniformity plays a crucial role in model theory and is nowadays known as Roelcke uniformity. Formally, 
%
%\begin{definition}[Roelcke uniformity]\label{d:roelcke:uniformity}
let $\group{G}$ be a topological group. The Roelcke uniformity on $\group{G}$ is the uniformity generated by entourages of the form $\{\tuple{g, \alpha \bullet g \bullet \beta} \colon g \in G, \alpha, \beta \in U\}$ for some open $U \subseteq G$ containing the neutral element of $\struct{G}$ (i.e.~the neighbourhood of the identity) such that $\{g^{-1} \colon g \in U\} = U$. 
%\end{definition}

\begin{remark}
The topology generated by Roelcke uniformity on $\group{G}$ coincides with the topology of $\group{G}$. Moreover, all group operations become uniformly continuous with respect to the Roelcke uniformity. 
\end{remark}

A topological group whose Roelcke uniformity is precompact (its Cauchy-completion is compact) is called \emph{Roelcke precompact}. Explicitly, we have the following definition. A topological group $\group{G}$ is Roelcke precompact if for every entourage $E$ in its Roelcke uniformity there exists a finite set $G_0 \subseteq G$ such that $E[G_0] = G$.
%\end{definition}

An important characterisation theorem of Roelcke precompact groups is given in \cite{Tsankov2012} as Theorem~2.4.: a topological subgroup $\group{G} \leq \group{S_\infty}$ (i.e.~a non-Archimedean group) is Roelcke precompact iff for every continuous action $\group{G}$ on a countable, discrete set $X$ with finitely many orbits, the induced action on $X^n$ has finitely many orbits for each natural $n$. This theorem says that Roelcke precompact groups are generalizations of oligomorphic groups, capturing their most important properties. In fact, Roelcke precompact groups are multi-sorted metric analogue of oligomorphic groups form the classical model theory \cite{yaacov2006model} \cite{ben2016weakly}.

In the early '80s Andreas Blass and Andre Scedrov \cite{blass1983boolean} tuning the representation theorem of Andr\'{e} Joyal and Myles Tierney \cite{joyal/tierney:1984} (see also \cite{butz1998representing}) to Boolean toposes introduced the notion of a \emph{coherent group}. A topological group $\group{G}$ is coherent if the topos of its continuous actions $\cont{\group{G}}$ is a coherent topos. From this definition, they obtained the following characterisation: a topological group $G$ is coherent if its every open subgroup $\group{H} \subseteq \group{G}$ has only finitely many double cosets: $\group{H}x\group{H} = \{h\bullet x \bullet k \colon h \in H, k \in H\}$ and $x \in G$. Independently, this property, has been also shown to characterise Roelcke precompact groups with small open subgroups. Therefore, Roelcke precompact groups and coherent groups coincide (with small open subgroups)\footnote{To the best knowledge of the authors, this coincidence has not been observed before and the connection between
metric model theory and classifying toposes has never been exploited.}.

The abovementioned theorem states that \emph{Boolean} coherent Grothendieck toposes are precisely the finite products of categories of actions of topological groups of automorphisms of multi-sorted $\omega$-categorical structures \cite{blass1983boolean} (this justifies the correspondence between nominal sets and classifying toposes from Figure~\ref{f:topos:theory}). Moreover, positive existential theories $T$ that are classified by Boolean Grothendieck toposes are characterised by the following properties: a) in every context there are only finitely many formulas up to equivalence modulo $T$, b) every first order formula is classically equivalent to a coherent formula modulo the theory \cite{blass1983boolean}.

Such theories $T$ have only finitely many completions, all of which are $\omega$-categorical. In fact, for a complete theory, the first property is equivalent to $\omega$-categoricity of the theory by (generalised) Ryll-Nardzewski theorem. The classifying topos may be constructed as the product of categories of the form $\cont{\aut{\struct{M}_\word{i}}}$, where $\aut{\struct{M}_i}$ is the topological group of automorphism of the unique countable model $\struct{M}_i$ corresponding to the $i$-th completion of theory $T$ \cite{blass1983boolean}. The theory of dense linear orders from Example~\ref{e:dense:order} satisfies these properties. Moreover, all theories that satisfy these properties are locally $\omega$-categorical.

\begin{example}[Topos for dense linear orders]\label{e:dense:topos}
The theory $T$ of dense linear orders has five completions corresponding to the following models:
\begin{itemize}
\item the one-element structure $1$
\item the structure of rational numbers with the natural ordering $\struct{Q}$
\item the structure of rational numbers extended with plus infinity with the natural ordering $\struct{Q^*}$
\item the structure of rational numbers extended with minus infinity with the natural ordering $\struct{Q_*}$
\item the structure of rational numbers extended with both plus and minus infinity with the natural ordering $\struct{Q^*_*}$
\end{itemize}
Because the theories of these models are $\omega$-categorical, the classifying topos $\catw{Set}(T)$ can be constructed as the product of categories: $$\catw{Set} \times \cont{\aut{\struct{Q}}} \times \cont{\aut{\struct{Q^*}}} \times \cont{\aut{\struct{Q_*}}} \times \cont{\aut{\struct{Q^*_*}}}$$ 
\end{example}

As mentioned in the introduction, we are interested in theories $T$ such that Algorithm~\ref{a:reachability} is effective on $T$-definable sets. One property of such theories is local $\omega$-categoricity introduced in Section~\ref{d:locally:omegacat} for classical first order theories. We could define a suitable version of local $\omega$-categoricity for positive existential theories, but for the purpose of this paper it suffices to require a weaker property. Let us recall that by Theorem~\ref{t:locally:eliminates}, every locally $\omega$-categorical theory eliminates least fixed points. Moreover, this elimination is finitistic in the sense that the natural iterative algorithm to compute the least fixed point terminates after finitely many steps. Here, we shall focus on a weaker property --- i.e.~elimination of transitive closures.
\begin{definition}[Elimination of transitive closures]
Let $T$ be a positive existential theory. We say that $T$ eliminates transitive closures if for every $T$-definable binary relation $R$ there is a finite $n$ such that $R^+ = \bigvee_{i=1}^n R^i$ is transitive.
\end{definition}
Observe that in such a case $R^+$ is the transitive closure of $R$ --- i.e.~if $S \supseteq R$ is any transitive relation, then $R^+ = \bigvee_{i=1}^n R^i \subseteq \bigvee_{i=1}^n S^i = \bigvee_{i=1}^n S = S$.
\begin{example}
All working examples of theories in this paper eliminate transitive closures:
\begin{itemize}
\item if $T$ is $\omega$-categorical then for every $T$-definable binary relation $R$, there are only finitely many $T$-definable relations in the same context; therefore $\bigvee_i R^i$ may be reduced to a finite disjunction; this includes Example~\ref{e:pure:sets}, Example~\ref{e:rationals} and Example~\ref{e:multi:omegacat}
\item more generally, if $T$ is locally $\omega$-categorical then since $T$ eliminates least fixed points, it also eliminates transitive closures; this includes Example~\ref{e:pure:sets:constants} and Example~\ref{e:dense:order} 
\item if $T$ is the propositional theory from Example~\ref{e:prop:theory} or the seemingly impossible theory from Example~\ref{e:impossible:theory}, then every infinite ascending chain of $T^+$-definable sets $\phi_0 \subseteq \phi_1 \subseteq \phi_2 \subseteq \cdots$ stabilizes, therefore $\bigvee_i R^i = \bigvee_{i \leq k} R^i$ for some finite $k$
\end{itemize}
\end{example}

\begin{theorem}\label{t:effective:reachability}
Let $T$ be a decidable positive existential theory that eliminates transitive closures. Then Algorithm~\ref{a:reachability} is effective on $T$-definable inputs.
\end{theorem}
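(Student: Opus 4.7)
The plan is to verify two independent facts about the algorithm: (i) every intermediate value computed in the loop is $T$-definable and the primitive tests are decidable, and (ii) the while-loop exits after a finite number of iterations. I would treat these in sequence.

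First, I would track the $T$-definability of the state. The inputs are a $T$-definable binary relation $E$, say presented by a positive existential formula $\varepsilon(\overline{x},\overline{y})$, and $T$-definable elements $a,b$. Viewing $R$ as a $T$-definable unary predicate $\rho(\overline{x})$, the initial state is $\rho_0(\overline{x}) \equiv (\overline{x} = b)$, and the update rule of the inner for-loop is, symbolically, $\rho_{k+1}(\overline{x}) \equiv \rho_k(\overline{x}) \vee \exists_{\overline{y}}\, \varepsilon(\overline{x},\overline{y}) \wedge \rho_k(\overline{y})$, which is again positive existential. (The pseudocode for-loop over pairs is implemented symbolically as this single formula; this is justified by the extended transfer principle, Theorem~\ref{t:extended:transfer}, since the finite Boolean combinations involved live inside $\catw{Def}(T)$.) A straightforward induction gives $\rho_k(\overline{x}) \equiv (\overline{x} = b) \vee \bigvee_{i=1}^{k} E^i(\overline{x}, b)$ modulo $T$. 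The membership test $a \in R$ amounts to checking $T \vdash \rho_k(a)$, and the loop guard $R' = R$ amounts to checking $T \vdash \rho_{k+1}(\overline{x}) \leftrightarrow \rho_k(\overline{x})$; both are decidable because $T$ is.

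Second, for termination I would invoke the elimination-of-transitive-closures hypothesis applied to $E$: there exists a finite $n$ such that $E^+ = \bigvee_{i=1}^{n} E^i$ is transitive modulo $T$. Since $E \subseteq E^+$ and $E^+ \circ E^+ \subseteq E^+$, an easy induction on $k$ shows $E^k \subseteq E^+$ for all $k \geq 1$, so in particular $E^{n+1} \subseteq \bigvee_{i=1}^{n} E^i$. This gives $\rho_{n+1} \equiv \rho_n$ modulo $T$, hence the equality $R' = R$ becomes provable in $T$ after at most $n+1$ iterations of the outer while-loop, and the algorithm returns an answer.

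I expect the bookkeeping to be the only delicate point: one has to be clear that the algorithm, as written, runs on $T$-definable sets rather than on individual elements, so the pseudocode constructs such as the for-loop over $\tuple{x,y} \in E$ and the set operations $\cup$ and $\in$ must be interpreted inside $\catw{Def}(T)$. Once this is agreed, the argument reduces to the two bullet points above: $T$-decidability handles effectivity of each step, and elimination of transitive closures (applied to the specific $T$-definable relation $E$ supplied as input) handles termination. No deeper ingredient is needed.
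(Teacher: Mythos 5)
Your proof is correct and is essentially the paper's own argument: the theorem is stated there without an explicit proof, but the preceding discussion (the run of Algorithm~\ref{a:reachability} computes the stages $\{b\}\cup E(-,b)\cup\dotsb\cup E^{k}(-,b)$ of a partial transitive-reflexive closure, each test is effective because $T$ is decidable, and elimination of transitive closures forces $E^{n+1}\subseteq\bigvee_{i=1}^{n}E^{i}$ and hence provable stabilization of the loop state) is exactly your two-part decomposition into definability/decidability of each step and termination. The only quibble is your parenthetical appeal to the extended transfer principle to justify the symbolic implementation of the inner for-loop: that principle concerns $\catw{ZFA}(\struct{A})$ and does not apply to a general positive existential $T$; the correct (and sufficient) justification, which you in fact also give, is that the update $\rho_{k+1}=\rho_k\vee\exists_{\overline{y}}\,(\varepsilon(\overline{x},\overline{y})\wedge\rho_k(\overline{y}))$ uses only the coherent operations under which $\catw{Def}(T)$ is closed.
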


\section{Automata}
\label{s:automata}
For this section we fix a single positive existential  theory $T$ that has disjoint coproducts and eliminates imaginaries. Moreover, if not stated otherwise, all objects and morphisms live in $\catw{Set}(T)$ --- the classifying topos of $T$. The subobject classifier will be denoted by $\Omega$, and the characteristic function of a subobject $\mor{s}{A_0}{A}$, by $\mor{\chi_{s}}{A}{\Omega}$. %A relation $r$ from an object $A$ to $B$ by $\dist{r}{A}{B}$.
\subsection{Preliminaries}
\label{ss:preliminaries}
An object of words in an alphabet $\Sigma$ is the free monoid $\tuple{\Sigma^*, \word{concat}, \epsilon}$ over $\Sigma$ generators. It consists of the concatenation morphism $\mor{\word{concat}}{\Sigma^* \times \Sigma^*}{\Sigma^*}$ and an empty word $\mor{\epsilon}{1}{\Sigma^*}$. By a language $L$ over alphabet $\Sigma$, we shall mean a subobject of $\Sigma^*$, or equivalently a morphism $\mor{L}{\Sigma^*}{\Omega}$. 
\begin{definition}[Automaton]
A non-deterministic automaton $A$ is a quadruple $A = \tuple{\mor{s_0}{I}{S}, \mor{s_f}{F}{S}, \dist{\sigma}{\Sigma \times S}{S}}$, where:
\begin{itemize}
\item $\mor{s_0}{I}{S}$ is a monomorphism of initial states
\item $\mor{s_f}{F}{S}$ is a monomorphism of final states
\item $\dist{\sigma}{\Sigma \times S}{S}$ is the transition relation
\end{itemize}
Automaton $A$ is called deterministic if $I$ is the terminal object $1$ and the transition relation $\sigma$ is functional.
\end{definition}
An automaton $A$ is called a $T$-automaton if all the data from its definition are $T$-definable. Observe that for any object $S$, we have the canonical monoidal structure on $\Omega^{S\times S}$, given by the internal composition of binary relations. Therefore, the adjoint transposition ${\mor{\sigma^\dag}{\Sigma}{\Omega^{S\times S}}}$ induces a unique homomorphisms of monoids: ${\mor{\overline{\sigma^\dag}}{\Sigma^*}{\Omega^{S\times S}}}$, and by transposition a relation: ${\dist{\overline{\sigma}}{\Sigma^* \times S}{S}}$.

\begin{definition}[Language of an automaton]
Let $A = \tuple{\mor{s_0}{I}{S}, \mor{s_f}{F}{S}, \dist{\sigma}{\Sigma \times S}{S}}$ be a non-deterministic automaton. By the language $L(A)$ recognized by $A$ we shall mean the subobject of $\Sigma^*$ that corresponds to the following relation:
$$\Sigma^* \overset{\id{\Sigma^*} \times \chi^\dag_{s_0}}{\slashedrightarrow} \Sigma^* \times S \overset{\overline{\sigma}}\slashedrightarrow S \overset{\chi_{s_f}}\slashedrightarrow 1$$  
\end{definition}
In case automaton $A$ is deterministic, the monoid of relations $\Omega^{S\times S}$ can be substituted by the monoid of functions $S^S$ with internal composition, and $L(A)$ can be constructed as the pullback of $s_f$ along $\mor{\overline{\sigma}^\dag \circ (\id{\Sigma^*} \times s_0)}{\Sigma^*}{S}$.

Before moving to more advanced theory, let us prove a simple theorem. 
\begin{theorem}\label{t:effective:emptiness}
Let $T$ be a decidable theory that eliminates transitive closures of binary relations. Then the problem of emptiness for a $T$-definable automata is decidable.
\end{theorem}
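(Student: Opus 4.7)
The plan is to reduce emptiness of a $T$-definable automaton to a single instance of the $T$-definable reachability problem handled by Theorem~\ref{t:effective:reachability}, and then invoke decidability of $T$ one final time to read off the answer.

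First I would forget the alphabet labels on transitions and pass to the $T$-definable one-step relation $E \subseteq S \times S$ defined by $E(q, q') \equiv \exists a \in \Sigma.\; \sigma(a, q, q')$. Since $\sigma$ is $T$-definable and positive existential logic is closed under existential quantification, $E$ is $T$-definable. A direct induction on $k$ using the definition of $L(A)$ shows that a word of length $k$ is accepted iff there exist $q_0 \in I$ and $q_f \in F$ with $E^k(q_0, q_f)$; hence $L(A) = \emptyset$ iff the $T$-definable set
\[
R \;=\; \{\,q_0 \in S \,:\, \exists q_f.\; s_0(q_0) \wedge s_f(q_f) \wedge E^*(q_0, q_f)\,\}
\]
is empty, where $E^* = \id_S \vee E^+$.

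Next I would construct $E^+$ effectively. By hypothesis $T$ eliminates transitive closures, so there is a finite $n$ with $E^{(n)} := \bigvee_{k=1}^n E^k$ transitive. To find such $n$ one enumerates $n = 1, 2, \dotsc$, at each stage forms the $T$-definable relation $E^{(n)}$, and tests derivability of the positive existential sequent
\[
E^{(n)}(x, y) \wedge E^{(n)}(y, z) \;\vdash\; E^{(n)}(x, z).
\]
Decidability of $T$ makes each test effective, and the elimination-of-transitive-closures hypothesis guarantees termination; alternatively one may invoke Theorem~\ref{t:effective:reachability} directly, noting that the present algorithm is a special case of Algorithm~\ref{a:reachability} applied to the inputs $E$, $I$, $F$ (with the singleton $\{b\}$ replaced by the $T$-definable set $F$, which the same proof handles verbatim).

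Finally I would decide emptiness of $R$. Once $E^*$ is available, $R$ is a $T$-definable set and $R = \emptyset$ in $\catw{Set}(T)$ iff the positive existential sequent $R(q_0) \vdash \bot$ is derivable from $T$, which is decidable by assumption on $T$. The answer to the emptiness problem is then the truth value of this derivability question.

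The only genuinely delicate point, and the one I expect to need the most care, is the passage in the final step from \emph{emptiness of $R$ as an object of $\catw{Set}(T)$} to \emph{derivability of a positive existential sequent} — it relies on soundness and completeness of positive existential logic with respect to the generic model, i.e.\ on the classification property of $\catw{Set}(T)$ discussed in Subsection~\ref{ss:classifying:topos}. Everything else is a routine combination of $T$-definability closure under the coherent operations used to build $E$, $E^{(n)}$, and $R$, plus two appeals to decidability of $T$.
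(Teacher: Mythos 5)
Your proposal is correct and follows essentially the same route as the paper's own (much terser) proof: reduce emptiness to reachability of $s_f$ from $s_0$ along the underlying transition relation, compute the transitive--reflexive closure effectively using elimination of transitive closures together with decidability (i.e.\ Algorithm~\ref{a:reachability}/Theorem~\ref{t:effective:reachability}), and then decide provability of the resulting positive existential sequent. The extra care you take at the final step --- justifying that emptiness of $R$ in $\catw{Set}(T)$ matches derivability of $R(q_0) \vdash \bot$ via the classification property --- is a detail the paper leaves implicit, but it is the same argument.
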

\begin{proof}
The problem of emptiness of an automaton is equivalent to the problem of reachability of a final state from one of the initial states. Therefore it suffices to compute the transitive-reflexive closure $\phi$ of its underlying transition relation by Algorithm~\ref{a:reachability} and then check if the formula $\exists_{s \in s_0} \exists_{f \in s_f} \phi(s, f)$ is provable in $T$.
\end{proof}

Kaminski and Francez studied, so called, finite memory automata \cite{kaminski1994finite}: i.e.~automata augmented with a finite set of registers, each of which can hold a natural number, and the automata can test for equality between registers and alphabets. Here is a suitable generalisation of this definition to a general structure $\struct{A}$.

\begin{definition}[Register automata]\label{d:register:machine}
An $\struct{A}$-automata with $k$ registers over alphabet $\Sigma$ is a quadruple $\tuple{S, \delta, I, F}$ such that:
\begin{itemize}
\item $S$ is a finite set of states
\item $I \subseteq S$ is a set of initial states, and $\phi_I \subseteq A^k$ is a set of possible initial configurations of registers
\item $F \subseteq S$ is a set of final states, and $\phi_F \subseteq A^k$ is a set of possible final configurations of registers
\item $\delta \subseteq (\Sigma \times S \times A^k) \times (S \times A^k)$ is a transition relation such that for every $s, s' \in S$ the relation $\delta(s, s') \subseteq (\Sigma \times A^k) \times A^k$ is $\struct{A}$-definable.
\end{itemize}
\end{definition}
We could state an even more general definition suitable for any theory $T$, but we refrain from doing this for the following reason: every register $\struct{A}$-automata with states $S$ and $k$-registers is equivalent to:
\begin{enumerate}
\item A register $\struct{A}$-automaton with a single state.
\item An $\struct{A}$-automaton (without registers).
\end{enumerate} 
Because $\catw{Def}(\struct{A})$ has disjoint coproducts, it can interpret finite cardinals. Moreover, every function between finite cardinals is definable. Let us assume that the context of $S$ is $A^m$.  Therefore, $S' = S \times A^k$ can be thought of as either the object of states of a definable automaton, or as the $\struct{A}$-automata with $k+m$ registers  and a single state $1$.

%\begin{example}[Vector Addition System with States]
%$\tuple{N, {<}}$
%\end{example}

\subsection{Myhill-Nerode theorem}\label{ss:myhill}
Consider the following morphism:
$$\Sigma^* \times \Sigma^* \to^{\word{concat}} \Sigma^* \to^{L} \Omega$$
Its transposition $\mor{(L \circ \word{concat})^\dag}{\Sigma^*}{\Omega^{\Sigma^*}}$ maps a word $w$ to the predicate: $\lambda x . wx \in L$.
\begin{definition}[Myhill-Nerode relation]
Let $\mor{L}{\Sigma^*}{\Omega}$ be a language. By the Myhill-Nerode relation $\word{MN}(L)$ of $L$, we shall mean the kernel relation of $(L \circ \word{concat})^\dag$, and by the the Myhill-Nerode quotient of $L$ we shall mean the coequaliser of this kernel pair:
$$\word{MN}(L) \rightrightarrows \Sigma^* \to^{(L \circ \word{concat})^\dag} {\Omega^{\Sigma^*}} $$
\end{definition}
Intuitively, two words $w, v \in \Sigma^*$ are related by Myhill-Nerode relation $\word{MN}(L)$ iff for every $x \in \Sigma^*$ we have that: $wx \in L \Leftrightarrow vx \in L$. 

\begin{lemma}
Let $A = \tuple{\mor{s_0}{1}{S}, \mor{F}{K}{S}, \mor{\sigma}{\Sigma \times S}{S}}$ be a deterministic automaton. The Myhill-Nerode quotient of $L(A)$ is a sub-quotient of $S$.
\end{lemma}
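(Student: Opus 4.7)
The plan is to factor the Myhill--Nerode quotient through the ``reachability'' morphism of the automaton. Since $A$ is deterministic, the transition relation $\sigma$ is a function $\mor{\sigma}{\Sigma \times S}{S}$, and its monoidal extension is a monoid homomorphism $\mor{\overline{\sigma}^\dag}{\Sigma^*}{S^S}$. Define the reachability morphism
\[
r \;=\; \overline{\sigma}^\dag \circ (\id{\Sigma^*} \times s_0) \colon \Sigma^* \to S,
\]
so that, by construction of $L(A)$, the subobject $L(A) \hookrightarrow \Sigma^*$ is the pullback of $s_f$ along $r$; equivalently, $L(A) = \chi_{s_f} \circ r$ as a morphism $\Sigma^* \to \Omega$.

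The first step is to show that $(L(A) \circ \word{concat})^\dag \colon \Sigma^* \to \Omega^{\Sigma^*}$ factors through $r$. Since $\overline{\sigma}^\dag$ is a monoid homomorphism, we have internally $r(w \cdot x) = \overline{\sigma}^\dag(x)(r(w))$ for all $w, x \in \Sigma^*$; hence
\[
L(A)(w \cdot x) \;=\; \chi_{s_f}\bigl(\overline{\sigma}^\dag(x)(r(w))\bigr).
\]
The right-hand side depends on $w$ only through $r(w)$. Concretely, this says that $(L(A) \circ \word{concat})^\dag = h \circ r$, where $\mor{h}{S}{\Omega^{\Sigma^*}}$ is the transpose of the morphism $S \times \Sigma^* \to \Omega$ sending $(s, x)$ to $\chi_{s_f}(\overline{\sigma}^\dag(x)(s))$.

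The second step is to conclude. Taking kernel pairs, the factorization yields $\mathrm{Ker}(r) \subseteq \mathrm{Ker}((L(A) \circ \word{concat})^\dag) = \word{MN}(L(A))$ as subobjects of $\Sigma^* \times \Sigma^*$. Coequalizing the larger relation is thus a further quotient of coequalizing the smaller one, so there is a canonical epimorphism from the coequalizer of $\mathrm{Ker}(r)$ to the Myhill--Nerode quotient $\Sigma^*/\word{MN}(L(A))$. But $\catw{Set}(T)$ is a topos, so every morphism factors as a regular epi followed by a mono; the coequalizer of $\mathrm{Ker}(r)$ is precisely the image of $r$, which is a subobject of $S$. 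Composing, $\Sigma^*/\word{MN}(L(A))$ is a quotient of the image of $r$, i.e.\ a sub-quotient of $S$, as claimed.

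The only delicate point is the first step: the factorization is ``obvious'' set-theoretically, but one has to verify it as an identity of morphisms using the monoid homomorphism property of $\overline{\sigma}^\dag$ and the adjointness between product and exponential. Everything else is formal manipulation with kernel pairs and the regular-epi/mono factorization in the coherent topos $\catw{Set}(T)$.
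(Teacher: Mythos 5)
Your proof is correct and follows the same skeleton as the paper's: both arguments introduce the reachability morphism $r=\overline{\sigma}^\dag \circ (\id{\Sigma^*}\times s_0)$ (called $s$ in the paper), show that its kernel pair is contained in the Myhill--Nerode kernel pair, and then pass to coequalizers, identifying $\Sigma^*/\mathrm{Ker}(r)$ with the image of $r$ inside $S$. Where you differ is in how the kernel inclusion is established. The paper argues on generalised elements $x,y\colon X\to\Sigma^*$ through a chain of equivalences, one of which precomposes $k^\dag\circ(x\times\id{\Sigma^*})$ with the monomorphism $\tuple{\id{X},\epsilon}$ and treats this as an ``iff''; as written, that step only tests the predicate at the empty continuation, and a mono is not right-cancellable, so the paper's chain needs the homomorphism property of $\overline{\sigma}^\dag$ to be invoked more explicitly than it is. You instead exhibit an outright factorisation $(L(A)\circ\word{concat})^\dag = h\circ r$ with $h\colon S\to\Omega^{\Sigma^*}$ the transpose of $(s,x)\mapsto\chi_{s_f}(\overline{\sigma}^\dag(x)(s))$, from which $\mathrm{Ker}(r)\leq\word{MN}(L(A))$ is immediate. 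This is the cleaner route: the only thing to verify is the identity $r(w\cdot x)=\overline{\sigma}^\dag(x)(r(w))$, which is exactly the monoid-homomorphism property of $\overline{\sigma}^\dag$, and the rest is formal. Your closing step (quotient of the image of $r$, hence a sub-quotient of $S$) matches the paper's conclusion that the epimorphism $\Sigma^*/s\twoheadrightarrow\Sigma^*/k$ exhibits the Myhill--Nerode quotient as a quotient of a subobject of $S$.
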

\begin{proof}
We have the following morphism:
$$\bfig
\node a1(0, 400)[\Sigma^* \times \Sigma^*]
\node b1(500, 400)[\Sigma^*]
\node a2(0, 0)[S^S \times S^S]
\node b2(500, 0)[S^S]

\arrow|r|/->/[a1`b1;\mathit{concat}]
\arrow|l|/->/[a2`b2;\circ]

\arrow|m|/->/[a1`a2;\overline{\sigma} \times \overline{\sigma}]
\arrow|m|/->/[b1`b2;\overline{\sigma}]

\node c1(1400, 400)[\Sigma^* \times S]
\node x(1000, 0)[S^S \times S]
\node c2(1400, 0)[S]

\node f(1700, 0)[\Omega]

\arrow|l|/->/[b1`c1;\id{\Sigma^*} \times s_0]
\arrow|r|/->/[b2`x;\id{S^S} \times s_0]
\arrow|r|/->/[x`c2;\mathit{ev} ]
\arrow|r|/->/[c1`c2;\overline{\sigma}^\dag]

\arrow|m|/->/[c1`x;\overline{\sigma} \times \id{S}]

\arrow|r|/->/[c2`f;F]

\efig$$
which by transposition corresponds to the morphism:
$$\mor{k}{\Sigma^*}{\Omega^{\Sigma^*}}$$
The kernel pair $\word{Ker}(k) \rightrightarrows \Sigma^*$ of this morphism $k$, is the Myhill-Nerode relation of the language $L(A)$. Such an equivalence relation induces a quotient object $\Sigma^*/k$ as the coequaliser of the kernel pair:
$$\word{Ker}(k) \rightrightarrows \Sigma^* \to^{[-]_k} \Sigma^*/k $$

On the other hand, the morphism $\mor{\overline{\sigma}^\dag \circ (\id{\Sigma^*} \times s_0)}{\Sigma^*}{S}$, which will be denoted by $s$, has its own kernel pair:
$$\word{Ker}(s) \underset{\pi_2}{\overset{\pi_1}{\rightrightarrows}} \Sigma^* \to^{s} S$$
We want to show that $\Sigma^*/k$ is a quotient of $\Sigma^*/s$, or equivalently that the relation $\word{Ker}(s)$ is coarser than $\word{Ker}(k)$. We shall prove it on generalised elements: $\mor{x,y}{X}{\Sigma^*}$. That is, we want to show that if $s \circ x = s \circ y$ then: $k \circ x = k \circ y$. But, by the triangle equality for exponent: $k \circ x = k \circ y$ iff $k^\dag \circ (x \times \id{\Sigma^*}) =  k^\dag \circ (y\times \id{\Sigma^*})$. Moreover, because $\epsilon$ is the unit for $\word{concat}$, the following diagram commutes:
$$\bfig
\node a1(0, 0)[\Sigma^* \times \Sigma^*]
\node b1(500, 0)[\Sigma^*]
\node a2(0, 400)[X \times \Sigma^*]
\node b2(500, 400)[X]

\arrow|r|/<-/[a2`b2;\tuple{\id{X}, \epsilon}]
\arrow|l|/->/[a1`b1;\mathit{concat}]

\arrow|m|/->/[a2`a1;x \times \id{\Sigma^*}]
\arrow|m|/->/[b2`b1;x]

\efig$$
with the top arrow being mono. Therefore, $k^\dag \circ (x \times \id{\Sigma^*}) =  k^\dag \circ (y\times \id{\Sigma^*})$ iff $k^\dag \circ (x \times \id{\Sigma^*}) \circ  \tuple{\id{X}, \epsilon} =  k^\dag \circ (y\times \id{\Sigma^*}) \circ  \tuple{\id{X}, \epsilon}$ iff $F \circ s \circ x = F \circ s \circ y$, what completes the proof of the claim. Now, because $s \circ \pi^s_1 = s \circ \pi^s_2$, we have that: $k \circ \pi^s_1 = k \circ \pi^s_2$ and by the definition of the kernel  of $k$ there is a unique monomorphism of relations $\mor{j}{\word{Ker}(s)}{\word{Ker}(k)}$, i.e.:~$j \circ \pi^k_1 = \pi^s_1$ and $j \circ \pi^k_2 = \pi^s_2$. Therefore, by the universal property of the coequaliser $\Sigma^*/s$, there is a unique (necessarily epi) morphism: $\Sigma^*/s \rightarrow \Sigma^*/k$. 
\end{proof}

\begin{lemma}
Let $\mor{L}{\Sigma^*}{\Omega}$ be a language. The Myhill-Nerode quotient of $L$ can be equipped with the structure of a deterministic automaton that recognizes $L$.
\end{lemma}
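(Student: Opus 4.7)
Write $k = (L \circ \word{concat})^\dag \colon \Sigma^* \to \Omega^{\Sigma^*}$, and let $\mor{q}{\Sigma^*}{Q}$ be the coequaliser of its kernel pair $\word{MN}(L) \rightrightarrows \Sigma^*$. Since $\catw{Set}(T)$ is a topos, $q$ is the regular-epi part of the epi/mono factorisation of $k$, so there is a unique mono $\mor{\hat{k}}{Q}{\Omega^{\Sigma^*}}$ with $\hat{k} \circ q = k$. The plan is to define initial state, final states and transition on $Q$, and verify recognition.

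\textbf{Initial state and final states.} Take $\mor{i_0}{1}{Q}$ to be $q \circ \epsilon$. For the final states, I first claim $L$ factors through $q$. On generalised elements $\mor{x, y}{X}{\Sigma^*}$ with $k \circ x = k \circ y$, I apply both sides to the element $\epsilon \colon 1 \to \Sigma^*$ (more formally, precompose the transposed morphism $\Sigma^* \times \Sigma^* \to \Omega$ with $\tuple{\id{X}, \epsilon \circ !}$ and use that $\word{concat} \circ \tuple{\id{\Sigma^*}, \epsilon\circ !} = \id{\Sigma^*}$). This forces $L \circ x = L \circ y$, so by the universal property of the coequaliser we obtain a unique $\mor{L'}{Q}{\Omega}$ with $L' \circ q = L$; define $\mor{s_f}{F}{Q}$ as the subobject classified by $L'$.

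\textbf{Transition.} The transition I want is ``$\sigma(a, [w]) = [wa]$'', which amounts to descending the composite $r \colon \Sigma \times \Sigma^* \to^{\iota \times \id{}} \Sigma^* \times \Sigma^* \to^{\word{concat}} \Sigma^* \to^q Q$ along the epi $\id{\Sigma} \times q$, where $\mor{\iota}{\Sigma}{\Sigma^*}$ is the inclusion of generators. Since products in a topos preserve regular epis, $\id{\Sigma} \times q$ is a regular epi with kernel pair $\id{\Sigma} \times \word{MN}(L)$, so it suffices to show $r$ coequalises this kernel pair. This is the right-congruence property of $\word{MN}(L)$, and it reduces to a single commutative square: the ``left-shift'' morphism $\mor{\lambda_a}{\Omega^{\Sigma^*}}{\Omega^{\Sigma^*}}$ transposed from $\Sigma^* \times \Omega^{\Sigma^*} \to^{\iota \times \id{}} \Sigma^* \times \Omega^{\Sigma^*} \to^{\word{concat}^* \circ \text{swap}} \Omega^{\Sigma^*} \times \Sigma^* \to^{\word{ev}} \Omega$ (informally $\lambda_a(f)(x) = f(ax)$) satisfies $\lambda_a \circ k = k \circ \rho_a$, where $\mor{\rho_a}{\Sigma^*}{\Sigma^*}$ is right concatenation with $a$. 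This identity is a routine unfolding of adjoint transpositions and associativity of $\word{concat}$. Once established, applying $\lambda_a$ to both sides of $k \circ x = k \circ y$ yields $k \circ (\rho_a \circ x) = k \circ (\rho_a \circ y)$, hence $q \circ \rho_a \circ x = q \circ \rho_a \circ y$, which is precisely right-congruence and furnishes $\mor{\sigma}{\Sigma \times Q}{Q}$.

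\textbf{Recognition.} By the same argument that gave the universal map $\overline{\sigma^\dag}$ in the preliminaries, $\sigma^\dag \colon \Sigma \to Q^Q$ extends uniquely to a monoid homomorphism $\overline{\sigma^\dag} \colon \Sigma^* \to Q^Q$. I verify that $\word{ev} \circ (\overline{\sigma^\dag} \times i_0) \colon \Sigma^* \to Q$ equals $q$: both are monoid-homomorphic ``actions'' of $\Sigma^*$ on the pointed object $(Q, i_0)$ whose composites with $\iota$ agree (by definition of $\sigma$ and because $q \circ \iota = \sigma \circ (\id{\Sigma} \times i_0)$ holds by construction), so they coincide on all of $\Sigma^*$ by the universal property of the free monoid. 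Pulling back $s_f$ along this common morphism then gives, by construction of $L'$ and $F$, the subobject of $\Sigma^*$ classified by $L$, so $L(A) = L$.

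The main obstacle is the right-congruence step: all the categorical meat sits in verifying the identity $\lambda_a \circ k = k \circ \rho_a$ from the definitions of $k$ and $\word{concat}$, and in checking the two regular-epi arguments (one for descending $r$ along $\id{\Sigma} \times q$, one for extending $\sigma$ to $\overline{\sigma^\dag}$) in internal language. Everything else --- well-definedness of $L'$, definition of $F$, the final induction that $\overline{\sigma^\dag}(w)(i_0) = q(w)$ --- is formal once right-congruence is in place.
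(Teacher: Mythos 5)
The paper states this lemma without any proof, so there is nothing to compare your argument against; judged on its own, your sketch is essentially correct and is the expected internalisation of the classical Myhill--Nerode construction. The three ingredients are all right: $L$ descends to $Q$ by evaluating $k$ at $\epsilon$ (this is the same $\word{concat} \circ \tuple{\id{\Sigma^*}, \epsilon} = \id{\Sigma^*}$ trick the paper uses in the preceding lemma), the transition descends along the regular epi $\id{\Sigma} \times q$ because $\Sigma \times (-)$ preserves both coequalisers and kernel pairs, and right-congruence reduces to the identity $\lambda_a \circ k = k \circ \rho_a$, which holds because $k(wa)(x) = L(wax) = k(w)(ax)$. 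Two small points deserve tightening. First, a typo: the domain of the morphism defining $\lambda_a$ should be $\Sigma \times \Omega^{\Sigma^*}$, not $\Sigma^* \times \Omega^{\Sigma^*}$. Second, and more substantively, in the recognition step you say that $q$ and $\word{ev} \circ (\overline{\sigma^\dag} \times i_0)$ ``are both monoid-homomorphic actions''---but $q$ itself is not a monoid homomorphism ($Q$ carries no monoid structure, since $\word{MN}(L)$ is only a right congruence). The clean version of your argument is to first descend $q \circ \word{concat}$ along $q \times \id{\Sigma^*}$ (right-congruence for arbitrary words, proved by the same computation $k(vw)(x) = k(v)(wx)$ as for letters) to obtain a genuine monoid homomorphism $\mor{\tau}{\Sigma^*}{Q^Q}$ with $\tau(w)(i_0) = q(w)$, and then conclude $\tau = \overline{\sigma^\dag}$ from freeness of $\Sigma^*$ since both agree on generators. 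With that repair the recognition claim $L(A) = L$ follows exactly as you state.
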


%\begin{theorem}
%Let $\mathcal{K}$ be a class of objects closed under taking sub-quotients. Then deterministic $\mathcal{K}$-automata %recognizes exactly the languages whose Myhill-Nerode quotients belong to $\mathcal{K}$. 
%\end{theorem}

\begin{theorem}[Subcompact rational languages]\label{t:subcompact}
Sub-compact deterministic automata recognize the languages whose Myhill-Nerode quotients are sub-compact.
\end{theorem}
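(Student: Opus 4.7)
The plan is to combine the two preceding lemmas with a stability property of sub-compact objects under sub-quotients.

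For the forward direction, let $A = \tuple{\mor{s_0}{1}{S}, \mor{s_f}{F}{S}, \mor{\sigma}{\Sigma \times S}{S}}$ be a deterministic automaton with $S$ sub-compact, so there is a mono $\mor{m}{S}{C}$ with $C$ compact. The previous lemma exhibits the Myhill-Nerode quotient $\Sigma^*/k$ of $L(A)$ as a quotient of $\Sigma^*/s$, where $\Sigma^*/s$ is the coequaliser of the kernel pair of $\mor{s}{\Sigma^*}{S}$ and thereby embeds as a subobject of $S$, hence of $C$. It therefore suffices to show that sub-quotients of a compact object are sub-compact. Given $\mor{i}{X}{C}$ a subobject and a regular epi $\mor{q}{X}{Q}$ corresponding to an equivalence relation $R$ on $X$, extend $R$ to an equivalence relation $\overline{R}$ on $C$ by forming the join with $\Delta_C$ inside $\word{Sub}(C \times C)$, and take the quotient $\mor{\overline{q}}{C}{C/\overline{R}}$ inside the classifying topos $\catw{Set}(T)$. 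Because the pullback of $\overline{R}$ along $i \times i$ is $R$, the induced map $Q \hookrightarrow C/\overline{R}$ is mono, and since compact objects in a topos are stable under regular quotients, $C/\overline{R}$ is compact. Hence $Q$, and therefore $\Sigma^*/k$, is sub-compact.

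For the converse, assume the Myhill-Nerode quotient $M$ of $L$ is sub-compact. The second lemma of this subsection equips $M$ with the structure of a deterministic automaton whose state object is $M$ itself and whose language is $L$. This automaton is sub-compact by assumption, so $L$ is recognised by a sub-compact deterministic automaton.

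The one genuinely technical point is the sub-quotient-stability step; the manipulation of equivalence relations is a standard internal topos calculation, while stability of compactness under regular quotients follows because a morphism out of a regular quotient is precisely a morphism out of the source coequalising the kernel pair, a condition that is preserved by filtered colimits of monomorphisms. If instead one prefers to work combinatorially with $T$-definable sets via Theorem~\ref{t:extended:transfer}, the same argument can be carried out in $\catw{Def}(T^+)$, where the extension of $R$ to $\overline{R}$ corresponds syntactically to adjoining the diagonal formula to the equivalence formula witnessing the quotient.
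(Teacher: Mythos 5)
Your proof is correct and follows essentially the same route as the paper: both directions reduce to the key fact that a sub-quotient of a compact object is sub-compact, which the paper establishes by pushing out the epimorphism along the monomorphism, and your explicit quotient $C/\overline{R}$ with $\overline{R} = R \vee \Delta_C$ is precisely how that pushout is computed in a topos, so the two arguments coincide. You additionally spell out the converse direction explicitly, which the paper leaves implicit in its second lemma.
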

\begin{proof}
If $X$ is a sub-quotient of $A$ via $A_0$, then we may form the pushout:
$$\bfig
\node a1(0, 0)[A_0]
\node b1(300, 0)[X]
\node a2(0, 300)[A]
\node b2(300, 300)[P]

\arrow|r|/->>/[a2`b2;]
\arrow|l|/->>/[a1`b1;e]

\arrow|r|/{^{(}->}/[a1`a2;m]
\arrow|r|/{^{(}->}/[b1`b2;]
%\arrow|r|/@{->}@[red]/[b1`b2;]

\efig$$
A pushout of an epimorphism $\mor{e}{A_0}{X}$ is an epimorphism, thus $P$ is a quotient of $A$. But a quotient of a compact object is compact, so $P$ is compact if $A$ is. Moreover, in a topos a pushout of a monomorphism is again monomorphism. Therefore, if $X$ is a sub-quotient of a compact object, then it is actually a quotient of a compact object.
\end{proof}

From the above theorem we can instantly get the generalisation of Myhill-Nerode Theorem for nominal sets. 
\begin{corollary}\label{c:compact:mn}
In a topos of continuous actions of a topological group, deterministic automata with finitely many orbits recognize exactly the languages whose Myhill-Nerode relations have finitely many orbits.
\end{corollary}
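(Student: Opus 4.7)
The plan is to deduce the statement directly from Theorem~\ref{t:subcompact} by identifying what ``sub-compact'' means concretely inside $\cont{\group{G}}$. First I would recall from Example~\ref{e:compact} that the compact objects of $\cont{\group{G}}$ are exactly the continuous $\group{G}$-sets with finitely many orbits. Then I would observe that in $\cont{\group{G}}$ every subobject of a compact object is already compact: any $\group{G}$-invariant subset $Y$ of a $\group{G}$-set $X$ is a disjoint union of whole orbits of $X$, so if $X$ has only finitely many orbits then $Y$ has at most as many. Consequently, the classes of compact and of sub-compact objects coincide in $\cont{\group{G}}$, and both amount to the property of being orbit-finite.

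With that dictionary in hand, the right-to-left direction is immediate: Theorem~\ref{t:subcompact} produces, from a language whose Myhill-Nerode quotient is sub-compact, a deterministic automaton whose state object is a sub-compact object supporting that quotient; in $\cont{\group{G}}$ this is an orbit-finite deterministic automaton. For the left-to-right direction, the lemma preceding Theorem~\ref{t:subcompact} shows that the Myhill-Nerode quotient of $L(A)$ is always a sub-quotient of the state object $S$; hence if $S$ is orbit-finite then the Myhill-Nerode quotient is sub-compact, i.e.~orbit-finite.

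The only remaining step is a terminological reconciliation between ``Myhill-Nerode relation has finitely many orbits'' and ``Myhill-Nerode quotient has finitely many orbits''. I would note that $\Sigma^*/\word{MN}(L)$ is by definition the coequaliser of $\word{MN}(L)\rightrightarrows \Sigma^*$, and in the $\group{G}$-equivariant setting an orbit of the quotient is exactly an equivalence class of words up to the joint action of $\group{G}$ and $\word{MN}(L)$; thus counting orbits of the quotient is the standard nominal reading of ``orbits of the Myhill-Nerode relation,'' and both phrasings describe the same orbit-finiteness property.

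I do not anticipate a substantial obstacle here, as all of the content sits in Theorem~\ref{t:subcompact}; the corollary is a specialisation of that theorem to the topos $\cont{\group{G}}$ together with the elementary observation that $\group{G}$-invariant subsets of orbit-finite $\group{G}$-sets are orbit-finite, which collapses the distinction between compact and sub-compact objects in this particular topos.
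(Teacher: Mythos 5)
Your proposal is correct and matches the paper's intent: the paper offers no explicit proof of this corollary beyond ``from the above theorem we can instantly get,'' and your argument is precisely the intended specialisation of Theorem~\ref{t:subcompact} to $\cont{\group{G}}$, where compact means orbit-finite and equivariant subsets of orbit-finite $\group{G}$-sets are unions of orbits, so sub-compact collapses to compact. Your terminological reconciliation of ``orbits of the Myhill--Nerode relation'' with orbit-finiteness of the quotient is also the correct reading.
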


\begin{theorem}\label{t:definable:mn}
Let $T$ be a theory that eliminates transitive closures of binary relations. Then $T$-definable deterministic automata recognize exactly the languages whose Myhill-Nerode quotients are $T$-definable.
\end{theorem}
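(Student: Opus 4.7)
The plan is to prove both directions of the biconditional separately, using elimination of transitive closures twice --- once to identify the reachable states and once to identify the Myhill-Nerode equivalence classes. All reasoning takes place inside the classifying topos $\catw{Set}(T)$, and exploits that $\catw{Def}(T^+)$ has disjoint coproducts and effective quotients.

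For the easy direction, assume the Myhill-Nerode quotient $Q$ of $L$ is $T$-definable. Then $Q$ carries a canonical deterministic automaton structure: initial state $[\epsilon]$ (the image of $1 \to \Sigma^{*} \twoheadrightarrow Q$), transition $\delta_{Q}([w], a) = [wa]$, which is well-defined because Myhill-Nerode is a right congruence for concatenation, and final states given by the image of $L$ in $Q$. Each ingredient is $T$-definable and the resulting automaton recognises $L$ by the second lemma of this subsection.

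For the converse, let $A = \tuple{s_{0}, s_{f}, \sigma}$ be a $T$-definable deterministic automaton with state object $S$ recognising $L$. By the first lemma of this subsection the Myhill-Nerode quotient is a sub-quotient of $S$, so it suffices to exhibit it as $R / {\sim}$ for a $T$-definable subobject $R$ of $S$ and a $T$-definable equivalence $\sim$ on $R$; elimination of imaginaries then yields the quotient. First, I would apply the hypothesis to the $T$-definable one-step-reachability relation $p \leq q \Leftrightarrow \exists a \in \Sigma .\, \sigma(p, a) = q$, obtaining a $T$-definable reflexive-transitive closure $\leq^{*}$, whence $R = \{q \in S : s_{0} \leq^{*} q\}$ is a $T$-definable subobject of $S$; restricting $A$ to $R$ preserves $L$. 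Second, I would apply the hypothesis to the $T$-definable pair-transition relation $\mathcal{T}$ on $R \times R$ defined by $\mathcal{T}(\tuple{p, q}, \tuple{p', q'}) \Leftrightarrow \exists a .\, p' = \sigma(p, a) \wedge q' = \sigma(q, a)$: the distinguishability relation $\not\sim$ is the set of pairs from which an immediately distinguishing pair is $\mathcal{T}^{*}$-reachable, hence $T$-definable; and $\sim$ is its complement in $R \times R$.

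The main obstacle is the final complementation step, converting the least-fixed-point $\not\sim$ into the equivalence $\sim$. This is automatic in Boolean coherent toposes, which covers all the locally $\omega$-categorical working examples of the paper, but in the general positive existential setting one must additionally ensure that the set of accepting states $F$ is a complemented subobject of $S$ (so that the immediately distinguishing pairs form a $T$-definable subobject of $R \times R$) and that $\not\sim$ itself is complemented in $R \times R$. Both are the standard decidability assumptions for a deterministic acceptor, and under them the construction yields the desired $T$-definable Myhill-Nerode quotient.
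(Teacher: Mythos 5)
There is a genuine gap in the converse direction: your construction of the Myhill--Nerode equivalence as the \emph{complement} of a distinguishability relation does not go through in the setting of the theorem. The distinguishability relation (and already the set of ``immediately distinguishing pairs'', which needs the complement of the final-state subobject $s_f$) is obtained by negation, and negation is precisely what positive existential logic lacks. You notice this and patch it by assuming $F$ and $\not\sim$ are complemented, but those hypotheses are not in the statement, and the theorem is explicitly meant to cover non-Boolean cases such as the propositional theory of Example~\ref{e:prop:theory} and the seemingly impossible theory of Example~\ref{e:impossible:theory}, whose classifying toposes ($\catw{Set}^{\bullet\rightarrow\bullet}$ and $\catw{Set}^{\bullet\rightarrow\bullet\rightarrow\cdots}$) are not Boolean. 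So what you prove is a Boolean special case, not the theorem as stated. A second, smaller issue: you use only the weak consequence of transitive-closure elimination that the reachable part $R\subseteq S$ is definable, whereas the full strength of the hypothesis is that every reachable state is reached by a word of length at most some fixed $n$ --- and it is this quantitative bound, not mere definability of $R$, that drives the paper's argument.

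The paper's proof avoids complementation entirely. It applies elimination of transitive closures once, to $R_\sigma(a,b)\leftrightarrow\exists_{x\in\Sigma}\,\sigma(x,a)=b$, to conclude that the image of $\mor{\overline{\sigma}}{\Sigma^*\times S}{S}$ factors through $\Sigma^{*n}=\bigsqcup_{i\leq n}\Sigma^i$; hence the quotient $\Sigma^*/s$ of $\Sigma^*$ by the kernel of the reachability map is a quotient of the definable object $\Sigma^{*n}$ and is therefore coherent. Then, writing the Myhill--Nerode quotient $\Sigma^*/k$ as the filtered colimit of the stages $\Sigma^{*j}/(k\circ j)$ and using the epimorphism $\Sigma^*/s\rightarrow\Sigma^*/k$ supplied by the first lemma of the subsection, compactness of $\Sigma^*/s$ forces this epimorphism to factor through a finite stage, and uniqueness of epi--mono factorisation identifies $\Sigma^*/k$ with that stage, which is coherent. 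The equivalence relation here is the kernel pair of a morphism into $\Omega^{\Sigma^*}$, so it is automatically effective in the topos; no distinguishability relation, no negation, and no table-filling argument is needed. If you want to repair your proof along your own lines, you would have to replace the complement of distinguishability by a positive description of indistinguishability, which is essentially what the paper's kernel-pair formulation accomplishes.
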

\begin{proof}
A definable morphism $\mor{\sigma}{\Sigma \times S}{S}$ induces a binary relation on $S$: $R_\sigma(a, b) \leftrightarrow \exists_{x \in \Sigma} \colon \sigma(x, a) = b$. Since $T$ eliminates transitive closures, the transitive closure $R^*_\sigma$ of $R_\sigma$ factors as: $R_\sigma \cup R^2_\sigma \cup \dotsc \cup R^n_\sigma$ for some finite $n$. Unwinding the definition of $R^k_\sigma$, this yields: $R^k_\sigma(a, b) \leftrightarrow \exists_{w \in \Sigma^k} \overline{\sigma}(w,a)=b$. Therefore, $R^*_\sigma(a, b) \leftrightarrow \exists_{w \in \Sigma^{*n}} \; \overline{\sigma}(w,a)=b$, which means that the image of $\mor{\overline{\sigma}}{\Sigma^* \times S}{S}$ factors through $\Sigma^{*n} = \bigsqcup_{i \leq n} \Sigma^i$. This means that $\Sigma^*/s$ is coherent. On the other hand, $\Sigma^*/k$ can be described as the filtered colimit of $\Sigma^{*j}/{k \circ j}$, where $\mor{j}{\Sigma^{*j}}{\Sigma^*}$ is the natural injection of coproducts. Therefore, the epimorphism $\Sigma^*/s \rightarrow \Sigma^*/k$ factors as an epimorphism $\Sigma^*/s \rightarrow \Sigma^{*j}/{{k \circ j}}$ followed by a monomorphism $\Sigma^{*j}/{{k \circ j}} \rightarrow \Sigma^{*}/{k}$. By the uniqueness of epi-mono factorisation, $\Sigma^{*j}/{{k \circ j}} \approx \Sigma^{*}/{k}$, and $\Sigma^{*}/{k}$ is coherent.
\end{proof}

\begin{example}
In all of the theories from Examples~\ref{e:pure:sets}, \ref{e:rationals}, \ref{e:multi:omegacat}, \ref{e:pure:sets:constants}, \ref{e:dense:order},  \ref{e:prop:theory} and \ref{e:impossible:theory} definable deterministic automata recognize exactly the languages whose Myhill-Nerode quotients are definable.
\end{example}

Definable non-deterministic automata are generally more expressive than definable deterministic automata. The reason is that, unlike finite sets, definable sets are not stable under the power-set construction.

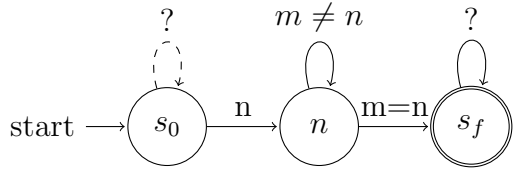
\begin{figure}[tb]
\centering
\begin{tikzpicture}[shorten >=1pt,node distance=2cm,on grid,auto] 
   \node[state,initial] (s_0)   {$s_0$}; 
   \node[state] (n) [right=of s_0] {$n$};
    %\node[state] (m) [below=of n] {$m$};  
   \node[state,accepting] (s_f) [right=of n] {$s_f$}; 
    \path[->] 
    (s_0) edge  node {n} (n)
    %	  edge  node {m} (m)
    	  edge [dashed,loop above] node {$?$} ()
    (n)   edge  node  {m=n} (s_f)
     %     edge  node  {} (m)
          edge [loop above] node {${m\neq n}$} ()
    %(m)   edge  node  {m} (s_f)
     %     edge [loop below] node {${\neq m}$} ()
    (s_f) edge [loop above] node {$?$} ();
\end{tikzpicture}
\caption{A non-deterministic automaton for language $\{w'' n w' n w \colon n \in \Sigma \wedge w,w',w'' \in \Sigma^*\}$. If we remove the dashed transition, then we obtain a deterministic automaton for language $\{n w' n w \colon n \in \Sigma \wedge w,w' \in \Sigma^*\}$.} 
\label{f:nondet} 
\end{figure}

\begin{example}[Definable deterministic vs.~non-deterministic automata]
Consider the following language in $\catw{Set}(\struct{N})$:
\begin{itemize}
\item the alphabet is the set of all atoms, i.e.: $\Sigma = N$
\item the language consists of all words over alphabet $\Sigma$, such that in each word there is a letter that appears at least twice, i.e.: $L = \{w'' n w' n w \colon n \in \Sigma \wedge w, w', w'' \in \Sigma^*\}$
\end{itemize}
One may check that the Myhill-Nerode quotient of $L$ has infinitely many orbits, therefore $L$ cannot be recognised by a deterministic automaton. On the other hand, the non-deterministic automaton from Figure~\ref{f:nondet} recognizes it: the automaton loops in state $s_0$ for a number of times, non-deterministically moves to the state ``n'' after seeing a letter $n \in \Sigma$, and then loops in that state until another letter $n$ appears in the word, in which case the automaton moves to the final state $s_f$.
\end{example}

\subsection{Recognition by monoids}\label{ss:monoids}
We say that a language $L$ over alphabet $\Sigma$ is recognized by a monoid $\struct{M}$ if there is a subobject $F$ of $M$ and a homomorphism $\mor{\overline{h}}{\Sigma^*}{\struct{M}}$ such that: $\mor{\chi_F \circ \overline{h}}{\Sigma^*}{\Omega}$ is the characteristic morphism of $L$. It is well-known that classical regular languages (i.e.~languages recognised by finite automata in $\catw{Set}$) are precisely the languages recognised by finite monoids. The correspondence does not carry over to definable regular languages and definable monoids --- in general the notion of a language recognised by a coherent deterministic automaton is much stronger than the notion of a language recognised by a coherent monoid.
\begin{example}[Definable deterministic automata vs.~definable monoids]\label{e:det:vs:coh}
Consider the following language in $\catw{Set}(\struct{N})$:
\begin{itemize}
\item the alphabet is the set of all atoms, i.e.: $\Sigma = N$
\item the language consists of all words over alphabet $\Sigma$, such that in each word the first appears at last twice, i.e.: $L = \{n w' n w'' \colon n \in \Sigma \wedge w,w' \in \Sigma^*\}$
\end{itemize} 
One may check that $L$ cannot be recognised by a monoid that has only finitely many orbits. On the other hand, the deterministic part (without the dashed transition) of the automaton from Figure~\ref{f:nondet} clearly recognizes it: the automaton moves to the state $n$ after seeing a letter $n \in \Sigma$, and then loops in that state until another letter $n$ appears in the word, in which case the automaton moves to the final state $s_f$.
\end{example}
Therefore, to hope for such a correspondence, we need a more general notion of a monoid, or a more restrictive notion of an automaton. Languages recognized by finitary monoids in $\catw{ZFA}(\struct{N})$ are the subject of the thesis of Rafal Stefanski \cite{stefthesis}, \cite{DBLP:journals/corr/abs-1907-10504}. The author developed a model of restricted deterministic automata whose languages are recognizable by finitary monoids. In this paper, we shall take another path and generalise the concept of a monoid. If $\catw{Set}(T)$ is the topos under consideration, then by $\catw{Rel}(T)$ we shall denote the category of binary relations in $\catw{Set}(T)$. Category $\catw{Rel}(T)$ equipped with the cartesian product $\times$ and the terminal object $1$ from $\catw{Set}(T)$ forms a monoidal category. By a promonoid in $\catw{Set}(T)$ we shall mean a monoid object in $\catw{Rel}(T)$. Explicitly, a promonoid $\struct{M}$ consists of an object $M$ together with the multiplication relation $\dist{\mu}{M \times M}{M}$ and the unital monomorphism $\mor{\eta}{M_0}{M}$ subject to the usual monoid laws. The category of promonoids and their homomorphisms will be denoted by $\catw{ProMon}(T)$. Because $\catw{Rel}(T)$ has small coproducts inherited from $\catw{Set}(T)$, for every $\Sigma$ there is a free promonoid $\Sigma^*$, which coincides with the free monoid in $\catw{Set}(T)$.

We should also observe that every promonoid has a representation as a monoid, i.e.: every promonoid $\struct{M}$ gives rise to the power monoid $\struct{P(M)}$ by convolution: the unit $1 \rightarrow \Omega^M$ is just the characteristic map of $\eta$, and the multiplication $\Omega^M \times \Omega^M \rightarrow \Omega^M$ is given as the free cocontinous extension of $M \times M \rightarrow \Omega^M$ on each coordinate.

% Since composition is cocontinous on each coordinate 
%\begin{itemize}
%\item $\mor{\nabla}{S}{S\times S}$
%\item $(\tuple{a,b},\tuple{c,d})\mu \tuple{x,y} \Leftrightarrow \tuple{x,y,b} = \tuple{a,d,c}$
%\end{itemize}
%which via convolution corresponds to the internal composition of binary relations on $S$. 

To our surprise, the concept of recognisability by promonoids has not been studied before. Therefore, the next theorem and the following Corollary~\ref{c:pro:regular} has been unknown even in case of the usual nominal sets.

\begin{theorem}[Characterisation of non-deterministic regular languages]\label{t:promonoids}
Let $\mathcal{K}$ be a class of objects closed under binary products. The languages recognized by non-deterministic $\mathcal{K}$-automata coincide with the languages recognized by $\mathcal{K}$-promonoids. 
\end{theorem}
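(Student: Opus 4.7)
The plan is to prove the two inclusions by a natural back-and-forth construction: given a non-deterministic automaton with states $S \in \mathcal{K}$, the corresponding $\mathcal{K}$-promonoid is carried by $M := S \times S$ (which lies in $\mathcal{K}$ by closure under products), with multiplication given by relational composition of endorelations of $S$; conversely, from a $\mathcal{K}$-promonoid $\struct{M}$ with underlying object $M$, the corresponding automaton has state object $S := M$, initial states the unit $\eta\colon M_0 \to M$, and transition induced by right-multiplication in $\struct{M}$.

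For the forward direction, I would first equip $S \times S$ with the promonoid structure given by $\mu((a,b),(c,d),(e,f)) \Leftrightarrow a = e \wedge d = f \wedge b = c$ and unit $\Delta_S \colon S \hookrightarrow S \times S$, and verify associativity and unitality (a routine check, since this just expresses the fact that endorelations of $S$ form a monoid under relational composition). The transition $\sigma \colon \Sigma \times S \slashedrightarrow S$ transposes to a single relation $h \colon \Sigma \slashedrightarrow S \times S$, which by freeness of $\Sigma^*$ in $\catw{ProMon}(T)$ extends uniquely to a promonoid homomorphism $\overline{h}\colon \Sigma^* \to S \times S$. The crucial observation is that this extension agrees on generalised elements with the monoid extension $\overline{\sigma^\dag}\colon \Sigma^* \to \Omega^{S \times S}$ used to define $L(A)$ in Section~\ref{ss:preliminaries} --- both are determined by the same underlying relation $\Sigma \slashedrightarrow S \times S$. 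Taking the accepting subobject to be $I \times F \hookrightarrow S \times S$, a direct unfolding of the composite defining $L(A)$ yields $\chi_{I \times F} \circ \overline{h} = L(A)$.

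For the reverse direction, given a $\mathcal{K}$-promonoid $\struct{M} = (M, \mu, \eta\colon M_0 \hookrightarrow M)$, a promonoid homomorphism $\overline{h}\colon \Sigma^* \slashedrightarrow M$ and a subobject $F \hookrightarrow M$ with $\chi_F \circ \overline{h} = L$, I set $S := M$, take initial states $s_0 := \eta$ and final states the given $F$, and define $\sigma\colon \Sigma \times M \slashedrightarrow M$ by $\sigma((x, m), m') \Leftrightarrow \exists y \colon h(x, y) \wedge \mu(m, y, m')$, where $h$ is the restriction of $\overline{h}$ to generators. The key computation is an induction on word length, using associativity of $\mu$ together with the homomorphism property of $\overline{h}$, showing that $\overline{\sigma}((w, m), m') \Leftrightarrow \exists y \colon \overline{h}(w, y) \wedge \mu(m, y, m')$. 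Specialising to $m \in \eta$ and invoking the left unit law collapses $\mu(m, y, m')$ to $y = m'$, yielding $L(A)(w) \Leftrightarrow \exists m' \in F \colon \overline{h}(w, m') \Leftrightarrow L(w)$.

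The main technical obstacle I anticipate is careful bookkeeping around the interplay between the monoidal structure on $\Omega^{S \times S}$ (used implicitly in defining the language of a non-deterministic automaton in Section~\ref{ss:preliminaries}) and the promonoidal structure on $S \times S$. Conceptually these are identified through the Kleisli adjunction between $\catw{Set}(T)$ and $\catw{Rel}(T)$ for the covariant powerset monad: a monoid homomorphism $\Sigma^* \to \Omega^{S \times S}$ in $\catw{Set}(T)$ is the same data as a promonoid homomorphism $\Sigma^* \to S \times S$ in $\catw{Rel}(T)$. Setting this correspondence up cleanly --- in particular, pinning down the meaning of expressions such as $\chi_F \circ \overline{h}$ when $\overline{h}$ is only a relation, which should be interpreted as the relational composite $\Sigma^* \slashedrightarrow M \slashedrightarrow 1$ given by the image intersected with $F$ --- is where most of the care will be needed; once it is in place, the two constructions above are mutually inverse up to the obvious isomorphism of recognised languages.
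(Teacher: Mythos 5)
Your proposal is correct and follows essentially the same route as the paper: an automaton with states $S$ is converted into the promonoid of binary relations on $S$ (carried by $S\times S$ under relational composition, with accepting subobject $I\times F$), and a promonoid $\struct{M}$ is converted into an automaton on $M$ with the unit as initial states, $F$ as final states, and transition given by multiplying with the image of a letter. The only cosmetic difference is that the paper verifies the promonoid-to-automaton direction via the Cayley-type representation $\mu^\dag\colon\struct{M}\to\struct{R_M}$ rather than your direct induction on word length, and your identification of monoid homomorphisms $\Sigma^*\to\Omega^{S\times S}$ with promonoid homomorphisms $\Sigma^*\to S\times S$ is precisely the freeness observation with which the paper's proof opens.
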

\begin{proof}
%First, let us observe that every promonoid $\struct{M} = \tuple{M, \mu, \eta}$ has a relational representation as a submonoid of the promonoid $\struct{R_M} = \tuple{M \times M, {\circ}, {=}}$ of binary relations on $M$.
Let us observe that for every object $M$ the object $\Omega^{M\times M}$ carries a canonical monoidal structure of binary relations under composition. Because the composition is cocontinuous in both variables, $\Omega^{M\times M}$ is freely generated by its restriction to the singletons, i.e.~by a promonoid $\struct{R_M} = \tuple{M \times M, {\circ}, {=}}$, which we shall call the promonoid of binary relations on $M$. Every promonoid $\struct{M} = \tuple{M, \mu, \eta}$ has a relational representation as a submonoid of the promonoid of $\struct{R_M}$ given by the transposition of its multiplication $\mu$, i.e.~$\dist{\mu^\dag}{M}{M \times M}$ is a homomorphism in the category of promonoids\footnote{One may treat this fact as the generalisation of the Cayley representation for a monoid $\struct{M}$ as a submonoid of the  endo-monoid $M^M$ under functional composition.}.

We claim that if a language $L$ is recognised by a promonoid $\struct{M}$ then it is recognized by promonoid $\struct{R_M}$. Let $\dist{F}{M}{1}$ be a characteristic function of a subobject of $M$. Then $\dist{\eta^\dag \times F}{M \times M}{1}$ is a characteristic function of a subobject of $M \times M$. Moreover, $\mu^\dag \circ (\eta^\dag \times F) = F$ by the definition of the transposition and  neutrality of $\eta$ under $\mu$. Therefore, if there is a relational homomorphism $\dist{\overline{h}}{\Sigma^*}{\struct{M}}$ such that: $L = F \circ \overline{h}$, then $L$ is recognised by homomorphism $\dist{\mu^\dag \circ \overline{h}}{\Sigma^*}{\struct{R_M}}$ with subobject $\eta^\dag \times F$. %Because $\Omega^{M\times M}$ is freely generated from $M \times M$... bla bla bla

Now, if we define a non-deterministic automaton $\struct{A}$ as: 
\begin{itemize}
\item its object of states is $M$
\item its transition relation is $\dist{\mu \circ (h \times \id{M})}{\Sigma \times M}{M}$
\item its initial states are $\eta$
\item its final states are $F$
\end{itemize}
then, $L(\struct{A})$ is given by:
$$\Sigma^* \overset{\id{\Sigma^*} \times \eta}{\slashedrightarrow} \Sigma^* \times M \overset{\overline{h} \times \id{M}}\slashedrightarrow M \times M \overset{\mu}\slashedrightarrow M \overset{F}\slashedrightarrow 1$$   
because $\mu^\dag \circ \overline{h}$ is a homomorphism as has been shown in the above. But $F \circ \mu \circ (\overline{h} \times \id{M}) \circ (\id{\Sigma^*} \times \eta) = F \circ \mu \circ (\overline{h} \times \eta) = F \circ \overline{h}$ what completes this part of the proof.

In the other direction, let us assume that $L$ is recognized by an automaton $A = \tuple{\mor{s_0}{I}{S}, \mor{s_f}{F}{S}, \dist{\sigma}{\Sigma \times S}{S}}$. Then $L$ is given as the left path on the following diagram:
$$\bfig
\node a1(0, 0)[\Sigma^* \times S]
\node b1(500, 0)[S]
\node a2(0, 300)[\Sigma^*]
\node b2(500, 300)[S \times S]
\node c(900, 0)[1]

\arrow|l|/@{->}/[a2`b2;\overline{\sigma^\dag}]
\arrow|l|/@{->}/[a1`b1;\overline{\sigma}]

\arrow|l|/@{->}/[a2`a1;\id{\Sigma^*} \times \chi^\dag_{s_0}]
\arrow|r|/@{->}/[b2`b1;\chi_{s_0} \times \id{S}]
\arrow|l|/@{->}/[b1`c; \id{S} \times \chi_{s_f}]

\efig$$
The square commutes by the definition of relational composition. Therefore, if we equip $\struct{R_S}$ with ${\chi_{s_0} \times \chi_{s_f}}$, then $\dist{\overline{\sigma^\dag}}{\Sigma^*}{\struct{R_S}}$ will recognize $L$.
\end{proof}

Because $T$-definable objects are closed under binary products, from Theorem~\ref{t:promonoids} we can get the following characterisation of definable non-deterministic languages. 
\begin{corollary}\label{c:pro:regular}
A language can be recognised by a $T$-definable promonoid if and only if it can be recognised by a $T$-definable non-deterministic automaton.
\end{corollary}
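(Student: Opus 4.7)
The plan is to obtain Corollary~\ref{c:pro:regular} as a direct instantiation of Theorem~\ref{t:promonoids}, taking $\mathcal{K}$ to be the class of $T$-definable objects of $\catw{Set}(T)$. So the task reduces to two verifications: (i) that $\mathcal{K}$ is closed under binary products, and (ii) that the constructions exhibited in the proof of Theorem~\ref{t:promonoids} stay inside $\mathcal{K}$ (and inside $T$-definable morphisms/relations) in both directions.

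For (i), I would quote Subsection~\ref{ss:coherent:theory}: $\catw{Def}(T)$ is a coherent category and in particular has finite limits, so if $A$ and $B$ are $T$-definable then $A \times B$ is $T$-definable as well. Hence $\mathcal{K}$ is closed under binary products and Theorem~\ref{t:promonoids} applies.

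For (ii), I would check both directions. Starting from a $T$-definable non-deterministic automaton $A = \tuple{\mor{s_0}{I}{S}, \mor{s_f}{F}{S}, \dist{\sigma}{\Sigma \times S}{S}}$, the promonoid $\struct{R_S} = \tuple{S\times S, \circ, {=}}$ constructed in the proof of Theorem~\ref{t:promonoids} is $T$-definable: its carrier $S \times S$ is a product of $T$-definable objects, composition of binary relations and the equality relation on $S$ are definable in $\catw{Def}(T^+) = \catw{Def}(T)$, and the recognising homomorphism $\overline{\sigma^\dag}$ together with the subobject $\chi_{s_0} \times \chi_{s_f}$ is built out of the given definable data. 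Conversely, given a $T$-definable promonoid $\struct{M} = \tuple{M,\mu,\eta}$ recognising $L$ via some $\dist{\overline{h}}{\Sigma^*}{\struct{M}}$ and $\dist{F}{M}{1}$, the automaton in the proof has states $M$, initial states $\eta$, final states (the subobject classified by) $F$, and transition relation $\mu \circ (h \times \id{M})$, all of which are $T$-definable since they are assembled from definable ingredients via definable operations.

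I expect no real obstacle beyond this bookkeeping: the hard content already sits inside Theorem~\ref{t:promonoids}, and what is specific to the corollary is merely that $T$-definability is preserved under the product, pairing, composition, and transposition operations used in that proof --- which is exactly the content of $\catw{Def}(T)$ being a coherent category (with disjoint coproducts and elimination of imaginaries as assumed at the start of Section~\ref{s:automata}, so that $\catw{Def}(T) \approx \catw{Def}(T^+)$ sits as the subcategory of coherent objects of $\catw{Set}(T)$). The only mild subtlety worth flagging is that the ``subobjects'' $F \subseteq M$ and $I, F \subseteq S$ must themselves be $T$-definable, but this is built into the notion of a $T$-definable automaton and of recognisability by a $T$-definable promonoid, so no extra argument is required.
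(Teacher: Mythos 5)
Your proposal is correct and follows essentially the same route as the paper: the paper derives Corollary~\ref{c:pro:regular} by instantiating Theorem~\ref{t:promonoids} with $\mathcal{K}$ the class of $T$-definable objects, justified precisely by the observation that $T$-definable objects are closed under binary products. The extra bookkeeping you supply (that the carriers $S\times S$ and $M$, and the associated relations, remain $T$-definable) is left implicit in the paper but is a harmless and accurate elaboration.
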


\section{Conclusions and further work}\label{s:conclusions}
This paper makes the following contributions. First of all, we show that mathematics can be transferred back and forth between sets with atoms and categories of continuous actions of topological groups (Theorem~\ref{t:transfer} and Theorem~\ref{t:extended:transfer}). Because the topos of continuous actions is much better behaved than the topos of sets with atoms, this allows for a simplification of the mathematical reasoning. For our second contribution, we showed the limits of the classical approach to computability in sets with atoms. It may be inferred from the analysis in \cite{DBLP:conf/lics/BojanczykKL11} that effectiveness of the naive algorithms to the reachability-like problems defined in a decidable complete first order theory is equivalent to $\omega$-categoricity of the theory. Our Theorem~\ref{t:stage} shows that  $\omega$-categoricity of the theory is actually equivalent to the  existence of \emph{any} effective algorithm for reachability-like problems. This leads to our third contribution. We showed how to push forward the concept of algorithms and automata beyond complete first order theories. This requires replacing toposes of continuous actions of topological groups by general classifying toposes for positive existential theories. We have coined a new property of a theory: ``elimination of transitive closures'' and showed that in some aspects it behaves like $\omega$-categoricity for complete first order theories. This includes Theorem~\ref{t:effective:reachability} for the reachability problem, Theorem~\ref{t:effective:emptiness} for the emptiness problem of an automaton, and Nihil-Nerode like Theorem~\ref{t:definable:mn}, which is central for studying behaviours of deterministic automata. For our forth contribution, we established a general correspondence between languages of non-deterministic automata and relational monoids in Theorem~\ref{t:promonoids}. This correspondence has not been known before even for very restricted cases (like nominal sets). The meta-contribution of this paper is in showing that many concepts incarnate in different areas of mathematics; by linking these incarnations together we can simplify our thoughts and proofs. For example, the connection between coherent groups (defined by Blass and Scedrov in '80s to characterise coherent toposes of continuous actions of topological gorups) and Roelcke precompact groups (defined by Roelcke in '70s to characterise topological dynamics) has not been observed before. Similarly, many of the theorems from \cite{bojanbook} with advanced proofs are easy consequences of the facts from the theory of classifying toposes and the connection established in this paper (compare the proof of Theorem~5.1 in \cite{bojanbook} with our Theorem~\ref{t:effective:reachability}).

For further work we shall study other concepts and algorithms definable in positive existential theories, e.g.:~constraint satisfaction problems with definable sets of constraints, definable pushdown automata, definable Turing machines, etc. Our recent paper \cite{ameMRP} shows that carrying over some of these results to non-Boolean classifying toposes is a challenging task.

\vskip 1em
\begin{acknnowledgments}
The authors would like to thank Alex Kruckman for his help in formalizing the proof of Theorem~\ref{t:stage}.

This research was supported by the National Science Centre, Poland, under projects 2018/28/C/ST6/00417.
\end{acknnowledgments}

%% Bibliography
\bibliographystyle{apalike}
\bibliography{bibl}

%Text of appendix \ldots

\end{document}